\documentclass[12pt]{article}
\usepackage[cp1251]{inputenc}
\usepackage[english]{babel}
\usepackage{amssymb}
\usepackage{amsmath}
\usepackage{amsthm}
\usepackage{amsbsy}
\usepackage{amsfonts}
\usepackage{graphicx}
\usepackage{exscale}
\fontsize{16}{20}
\newtheorem{thm}{Theorem}[section]
\newtheorem{lem}[thm]{Lemma}
\newtheorem{prop}[thm]{Proposition}

\newtheorem{rem}[thm]{Remark}

\numberwithin{equation}{section}


\pagestyle{myheadings}

\setlength{\parindent}{3em} \setlength{\textwidth}{16.5cm}
\setlength{\textheight}{23cm} \setlength{\topmargin}{-1cm}
\setlength{\headheight}{1cm} \setlength{\headsep}{1cm}
\setlength{\oddsidemargin}{-0.1cm}
\setlength{\evensidemargin}{-0.6cm}

\begin{document}
\begin{center}

\begin{Large}
{\bf Effect of magnetic field on resonant tunneling in 3D waveguides of variable cross-section} 
\end{Large}

\vspace{4ex}

\begin{large}
L.M. Baskin, B.A. Plamenevskii, O.V. Sarafanov
\end{large}

\vspace{6ex}

\end{center}

\begin{abstract}
We consider an infinite three-dimensional waveguide that far from the coordinate origin coincides with a cylinder. The waveguide has two narrows of diameter $\varepsilon$. The  narrows play the role of effective potential barriers for the longitudinal electron motion.
The part of waveguide between the narrows becomes a "resonator"\, and there can arise conditions for electron resonant tunneling. A magnetic field in the resonator can change the basic characteristics of this phenomenon. In the presence of a magnetic field, the tunneling phenomenon  is feasible for producing spin-polarized electron flows consisting of electrons with spins of the same direction.
 
We assume that the whole domain occupied by a  magnetic field is in the resonator. An electron wave function satisfies the Pauli equation in the waveguide and vanishes at its boundary. Taking $\varepsilon$ as a small parameter, we derive asymptotics  for the probability $T(E)$ of an electron with energy $E$ to pass through the resonator, for the "resonant energy"\,$E_{res}$, where $T(E)$ takes its maximal value, and for some other resonant tunneling characteristics.
\end{abstract}

\section{Introduction}\label{s1new}
In this paper, we consider a three-dimensional waveguide that, far from the coordinate origin, coincides with a cylinder $G$ containing the axis $x$. The cross-section of $G$ is a two-dimensional domain (of an arbitrary form) with smooth boundary. The waveguide has two narrows of small diameter $\varepsilon$. The waveguide part between the narrows plays the role of a resonator and there can arise conditions for electron resonant tunneling. This phenomenon consists of the fact that, for  an electron with energy $E$, the probability $T(E)$ to pass from one part of the waveguide to the other through the resonator has a sharp peak at $E=E_{res}$, where $E_{res}$ denotes the "resonant"\, energy. To analyse the operation of devices based on resonant tunneling, it is important to know $E_{res}$, the behavior of $T(E)$ for $E$ close to $E_{res}$, the height of the resonant peak, etc.  

The presence of a magnetic field can essentially affect the basic characteristics of the resonant tunneling
and bring new possibilities for applications in electronics. In particular, in the presence of a magnetic field, the tunneling phenomenon  is feasible for producing spin-polarized electron flows consisting of electrons with spins of the same direction. We suppose that a part of the resonator has been occupied by the magnetic field generated by an infinite solenoid with axis orthogonal to the axis $x$.   Electron wave function satisfies the Pauli equation in the waveguide and vanishes at its boundary (the work function of the waveguide is supposed to be sufficiently large, so such a boundary condition has been justified). Moreover, we assume that only one incoming wave and one outgoing wave can propagate in each cylindrical outlet of the waveguide. In other words, we do not discuss the multichannel electron scattering and consider only electrons with energy between the first and the second thresholds. We take $\varepsilon$ as small parameter and obtain asymptotic formulas for the aforementioned characteristics of the resonant tunneling as $\varepsilon \to 0$. It turns out that such formulas depend on the limiting form of the narrows. We suppose that, in a neighborhood of each narrow, the limiting waveguide coincides with a double cone symmetric about the vertex.

The asymptotic description of electron resonant tunneling in the absence of external fields was presented in \cite{BNPS} for 3D quantum waveguides of similar geometry. Previously there were only episodic studies of the phenomenon by numerical methods, see \cite {LCM}, \cite{BNPP}. \,The extensive literature on the resonant tunneling in 1D waveguides was mainly based on the WKB-method; for our problem the method does not work. In \cite{BNPS}, the  study was based on the compound asymptotic method; the general theory of the method was elaborated in \cite{MNP}. 
In the present paper, we modify the approach in \cite{BNPS} not only analysing the effect of magnetic fields but also developing a more general and simple scheme of study.                     

Section 2 contains statement of the problem. In Section 3, we introduce so-called "limit" boundary value problems, which are independent of the parameter $\varepsilon$. Some model solutions to the problems are studied in Section 4. The solutions will be used in Section 5 to construct asymptotic formulas for appropriate wave functions. In the same section, we investigate the asymptotics of the wave functions and derive asymptotic formulas for main characteristics of the resonant tunneling. Remainders in the asymptotic formulas are estimated in Section 6. 

\section{Statement of the problem}\label{s2new}
To describe the domain  $G(\varepsilon)$ in $\mathbb R^3$ occupied by the waveguide we first introduce domains  $G$ and $\Omega$ in $\mathbb R^3$ independent of $\varepsilon$. The domain $G$ is the cylinder
$$
G=\mathbb R \times D =\{(x, y,z)\in \mathbb R^3: x\in \mathbb
R=(-\infty, +\infty); (y,z) \in D\subset\mathbb{R}^2\}
$$
whose cross-section $D$ is a bounded two-dimensional domain with smooth boundary. Let us \begin{figure}[!htbp]
\centering
\includegraphics[scale=.4]{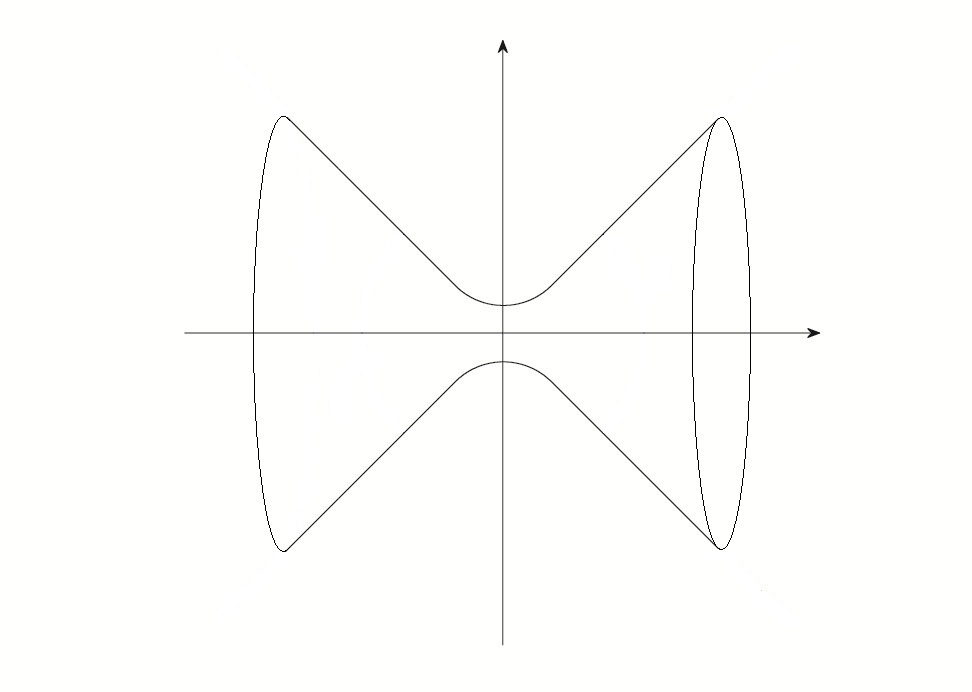}
\caption{The domain $\Omega$.}
\label{Fig.Omega}
\end{figure}
define $\Omega$. Denote by $K$ a double cone with vertex at the coordinate origin $O$ that contains the axis $x$ and is symmetric about the origin. The set $K\cap S^2$ with $S^2$ standing for the unit sphere consists of two
\begin{figure}[h]
\vspace{-1cm}
    \centering
        \includegraphics[scale=.7]{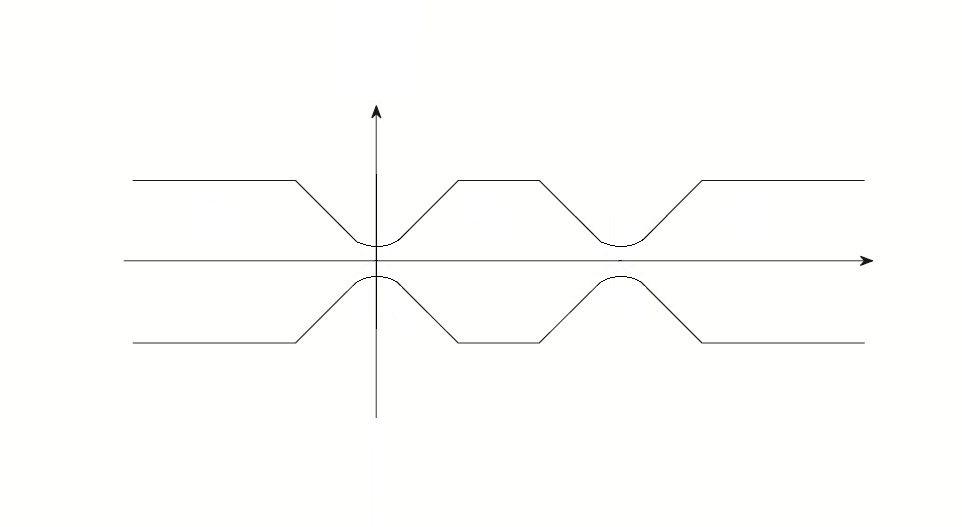}
\vspace{-1cm}
\caption{The waveguide $G(\varepsilon).$}
\end{figure}non-overlapping one-connected domains symmetric about the center of sphere.
Assume that the domain $\Omega$ contains the cone  $K$ together with a neighborhood of its vertex. Moreover,  $\Omega$ coincides with $K$ outside a sufficiently large ball centered at the origin. The boundary $\partial
\Omega$ of  $\Omega$ is supposed to be smooth.

Let us turn to the waveguide $G(\varepsilon)$. We denote by $\Omega
(\varepsilon)$ the domain obtained from $\Omega$ by the contraction with center at $O$ and coefficient  $\varepsilon$. In other words, 
$(x, y,z) \in \Omega (\varepsilon)$ if and only if 
$(x/\varepsilon, y/\varepsilon, z/\varepsilon) \in \Omega$. Let $K_j$ and
$\Omega_j (\varepsilon)$ stand for  $K$ and  $\Omega (\varepsilon)$
shifted by the vector  $\mathbf{r}_j=(x_j^0, 0, 0)$, $j=1, 2$. The value 
$|x_1^0-x_2^0|$ is assumed to be sufficiently large so that the distance 
between $\partial K_1 \cap
\partial K_2$ and  $G$ is positive. We set
$$
G(\varepsilon)=G\cap \Omega_1(\varepsilon)\cap
\Omega_2(\varepsilon).
$$

The wave function ${\bf \Psi}=(\Psi_+,\Psi_-)^T$ of an electron with energy $E=k^2\hbar^2/2m$ in a magnetic field ${\bf H_0}$ satisfies the Pauli equation
\begin{equation}\label{vector Pauli equation}
(-i\nabla+\textbf{A})^2{\bf\Psi}+(\widehat{\bf\sigma},\textbf{H}){\bf\Psi}=k^2{\bf\Psi}\quad  {\text {in}}\quad G(\varepsilon),
\end{equation}
where $\widehat{\bf\sigma}=(\sigma_1,\sigma_2,\sigma_3)$ with the Pauli matrices
$$
\sigma_1=\left(%
\begin{array}{cc}
  1 & 0 \\
  0 & 1 \\
\end{array}%
\right), \quad
\sigma_2=\left(%
\begin{array}{cc}
  0 & -i \\
  i & 0 \\
\end{array}%
\right), \quad
\sigma_3=\left(%
\begin{array}{cc}
  1 & 0 \\
  0 & -1 \\
\end{array}%
\right),
$$
and $\textbf{H}=-(e/c\hbar\textbf{H}_0)=\text{rot}\,\textbf{A}$. If the magnetic field is directed along  the axis $z$ that is 
$\textbf{H}=H\textbf{k}$,  $H$ being a scalar function, then
(\ref{vector Pauli equation}) decomposes into the two scalar equations
\begin{equation}\label{scalar Pauli equation}
(-i\nabla+\textbf{A})^2\Psi_{\pm}\pm H\Psi_{\pm}=k^2\Psi_{\pm}.
\end{equation}
Let the function $H$ depend only on 
$\rho=((x-x_0)^2+(y-y_0)^2)^{1/2}$ with  $H(\rho)=0$ for $\rho>R$,
$R$ being a fixed positive number. Such a field is generated by an infinite 
solenoid with radius  
$R$ and axis parallel to the axis $z$. Then ${\bf A}=A{\bf e}_{\psi}$, where ${\bf
e}_{\psi}=\rho^{-1}(-y+y_0,x-x_0,0)$ and
$$
A(\rho)=\frac{1}{\rho}\left\{%
\begin{array}{ll}
    \int_0^{\rho}tH(t)\,dt, & \rho<R; \\
    \int_0^RtH(t)\,dt, & \rho>R. \\
\end{array}%
\right.
$$
The equality $\hbox{rot}\,\textbf{A}=\textbf{H}$ determines 
$\textbf{A}$ up to a term of the form $\nabla f$. We neglect the waveguide boundary permeability to the electrons and consider the equations  (\ref{scalar Pauli equation}) supplemented by the homogeneous boundary condition
\begin{equation}\label{boundary condition}
    \Psi_{\pm}=0\qquad \hbox{on }\partial G(\varepsilon).
\end{equation}
The obtained boundary value problems are self-adjoint with respect to the Green formulas
\begin{eqnarray*}
((-i\nabla+\textbf{A})^2u\pm Hu-k^2u,v)_{G(\varepsilon)}-
(u,(-i\nabla+\textbf{A})^2v\pm Hv-k^2v)_{G(\varepsilon)}\\+
(u,(-\partial_n-A_n)v)_{\partial
G(\varepsilon)}-((-\partial_n-A_n)u,v)_{\partial G(\varepsilon)}=0,
\end{eqnarray*}
where $A_n$ is the projection of $\textbf{A}$ onto the outward normal to  $\partial
G(\varepsilon)$ and $u,v\in C_c^{\infty}(G(\varepsilon))$ (which means that $u$ and $v$ are smooth functions vanishing outside a bounded set). Besides, 
$\Psi_{\pm}$ must satisfy some radiation conditions at infinity. To formulate such conditions, we have to introduce incoming and outgoing waves. From the requirements on $\textbf{H}$ and the choice of $\textbf{A}$, it can be seen that the coefficients of equations (\ref{scalar Pauli equation}) stabilize at infinity with a power rate. 
Such a slow stabilization offers difficulties in defining these waves. Therefore we will modify $\textbf{A}$  by a gauge transformation so that the coefficients in (\ref{scalar Pauli equation}) become constant for large $|x|$.

Let $(\rho,\psi)$ be polar coordinate on the plane $xy$ centered at
$(x_0,y_0)$ and $\psi =0$ on the ray of the same direction as the axis $x$. We introduce  $f(x,y,z)=c\psi$, where
$c=\int_0^RtH(t)\,dt$. For definiteness, assume that 
$-\pi/2<\psi<3\pi/2$. The function $f$ is uniquely determined in the waveguide for
$|x-x_0|>0$, moreover, $\nabla f=\textbf{A}$ for
$|x-x_0|>R$. Let $\tau$ be a cut-off function on $\mathbb{R}_+$ equal to $1$
for $t>R+2$ and $0$ for $t<R+1$. We set
$\textbf{A}'(x,y,z)=\textbf{A}(x,y,z)-\nabla(\tau(|x-x_0|)
f(x,y,z))$. Then
$\hbox{rot}\,\textbf{A}'=\hbox{rot}\,\textbf{A}=\textbf{H}$ while
$\textbf{A}'=0$ for $|x-x_0|>R+2$. The wave functions 
$\Psi_{\pm}'=\Psi_{\pm}\exp\{i\tau f\}$ satisfy
(\ref{scalar Pauli equation}) with $\textbf{A}$ replaced by
$\textbf{A}'$. For  $|x-x_0|>R+2$, the coefficients of the equations (\ref{scalar
Pauli equation}) with new  vector potential  $\textbf{A}'$ coincide with the coefficients of 
the Helmholtz equation
$$
-\triangle\Psi_{\pm}'=k^2\Psi_{\pm}'.
$$
In order to formulate the radiation conditions, we consider the problem
\begin{eqnarray}\label{problem on D}
  \Delta v(y,z)+\lambda^2v(y,z) &=& 0,\qquad (y,z) \in D,
  \\\nonumber
  v(y,z) &=& 0,\qquad (y,z) \in \partial D.
\end{eqnarray}
The values of parameter $\lambda^2$ that correspond to the nontrivial solutions of this problem form the sequence $\lambda^2_1 < \lambda^2_2 < \dots $ with $\lambda^2_1>0$. These numbers are called the thresholds. Assume that  $k^2$ in (\ref{scalar Pauli equation}) coincides with none of the thresholds and take up the equation in (\ref{scalar Pauli equation})  with $\Psi_+$. For a fixed  $k^2>\lambda_1^2$ there exist finitely many  bounded solutions (wave functions) linearly independent modulo $L_2(G(\varepsilon))$; in other words, a linear combination of such solutions belongs to $L_2(G(\varepsilon))$ if and only if all coefficients are equal to zero. The number of wave functions with such properties remains constant for $k^2 \in (\lambda^2_q, \lambda^2_{q+1})$, $q=1, 2, \dots $ and step-wise increases at the thresholds.

In the present paper, we discuss only the situation, where $k^2 \in (\lambda^2_1, \lambda^2_2)$. In such a case, there exist two independent wave functions. A basis in the space spanned by such functions can be composed of the wave functions  $u^+_1$ and $u^+_2$ satisfying the radiation conditions     
\begin{eqnarray}\label{um}
   u_1^+(x,y,z) &=& \begin{cases}
    e^{i\nu_1x}\Psi_1(y,z) + s_{1 1}^+(k)\,e^{-i\nu_1x}\Psi_1(y,z)+O(e^{\delta x}), & x\rightarrow -\infty, \\
                s_{1 2}^+(k)\,e^{i\nu_1x}\Psi_1(y,z)+O(e^{-\delta x}), & x\rightarrow +\infty; \\
\end{cases} \\\nonumber
  u_{2}^+(x,y) &=& \begin{cases}
    s_{2 1}^+(k)\,e^{-i\nu_1x}\Psi_1(y,z)+O(e^{\delta x}), & x\rightarrow -\infty, \\
    e^{-i\nu_1x}\Psi_1(y,z) +
s_{2 2}^+(k)\,e^{i\nu_1x}\Psi_1(y,z)
+O(e^{-\delta x}), & x\rightarrow +\infty ; \\
\end{cases}
\end{eqnarray}
here $\nu_1=\sqrt{k^2-\lambda^2_1}$ and  $\Psi_1$ stands for an eigenfunction of problem
(\ref{problem on D}) corresponding to
$\lambda^2_1$ and being normalized by the equality
\begin{equation}\label{eigenfunctions}
    \nu_1\int_D|\Psi_1(y,z)|^2\,dy\,dz=1.
\end{equation}
The function  $U_1(x,y,z)=e^{i\nu_1x}\Psi_1(y,z)$ in the cylinder $G$ 
is a wave incoming from $-\infty$ and outgoing to
$+\infty$, while  $U_{2}(x,y,z)=e^{-i\nu_1x}\Psi_1(y,z)$ is a wave going from $+\infty$ to $-\infty$. The matrix 
$$
S^+=\|s_{mj}^+\|_{m,j=1,2}
$$
with entries determined by (\ref{um}) is called the scattering matrix; it is unitary. The quantities 
\begin{equation*}
    R_1^+:=|s_{11}^+|^2, \qquad \,\,\,  T_1^+:=|s_{1 \,2}^+|^2
\end{equation*}
are called the reflection coefficient and the transition coefficient for the wave $U_1$ coming in $G(\varepsilon)$ from $-\infty$. (Similar definitions can be given for the wave $U_{2}$, incoming from  $+\infty$.) In the same manner we introduce the scattering matrix  
$S^-$ and the reflection and transition coefficients $R^-_1$ and $T^-_1$ for the equation in 
(\ref{scalar Pauli equation}) with $\Psi_-$.

We consider only the scattering of the wave going from
$-\infty$ and denote the reflection and transition coefficients by
\begin{equation}\label{2D reflection coeff}
R^\pm=R^\pm(k, \varepsilon)=|s_{11}^\pm(k, \varepsilon)|^2, \qquad \,\,\, T^\pm=T^\pm (k,
\varepsilon)=|s_{12}^\pm(k, \varepsilon)|^2.
\end{equation}
We intend to find a "resonant"\, value  $k_r^\pm=k_r^\pm(\varepsilon)$
of the parameter $k$ which corresponds to the maximum of the transition coefficient and to describe the behavior of $T^\pm(k,\varepsilon)$ near
$k_r^\pm(\varepsilon)$ as $\varepsilon \to 0$.

\section{Limit problems}\label{s3new}

To derive the asymptotics of a wave function (i.e. a solution to problem (\ref{scalar Pauli equation})) as $\varepsilon \rightarrow 0$, we make use of the compound asymptotics method. To this end, we introduce the "limit"\, problems independent of $\varepsilon$. Let the vector potential $\textbf{A}'$ and, in particular, the magnetic field $H$ differ from zero only in the resonator, which is the part of waveguide between the narrows. Then, outside the resonator 
and in a neighborhood of the narrows, the wave function 
under consideration satisfies the Helmholtz equation.

\subsection{ First kind limit problems}
We set $G(0)=G\cap K_1\cap K_2$ (Fig. \ref{Fig.G_0}), so $G(0)$ consists of three parts
$G_1$, $G_2$, and $G_3$. 
\begin{figure}[!htbp]
    \centering
        \includegraphics[scale=.7]{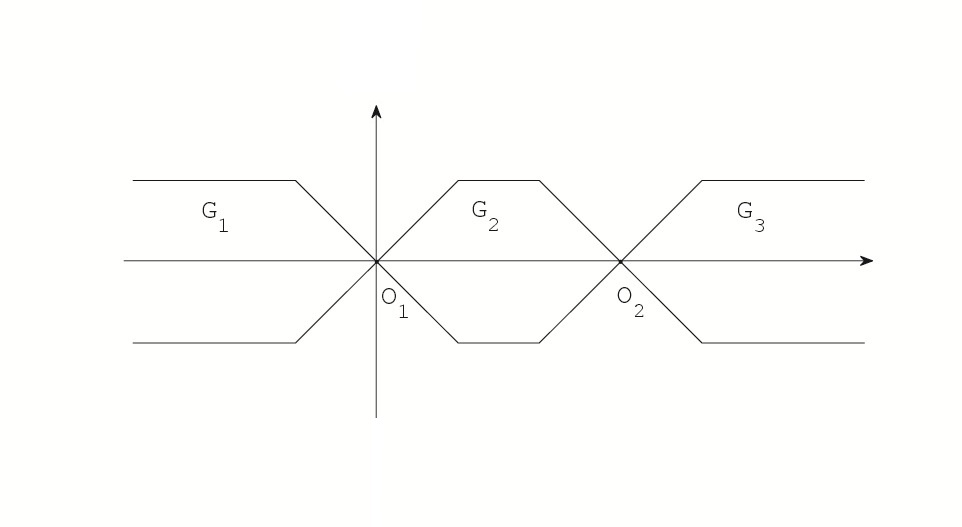}
\vspace{-1cm}
\caption{The domain $G(0).$}
\label{Fig.G_0}
\end{figure}
The boundary value problems
\begin{eqnarray}\label{limit problems kind 1}
  \Delta v(x,y,z)+k^2v(x,y,z) &=& f(x,y,z),\qquad(x,y,z)\in G_j,
  \\\nonumber
  v(x,y,z) &=& 0,\qquad(x,y,z)\in \partial G_j,
\end{eqnarray}
where $j=1,3$, and
\begin{eqnarray}\label{limit problems kind 1 in G2}
  (-i\nabla+\textbf{A}')^2v(x,y,z)\pm H(\rho)v(x,y,z)-k^2v(x,y,z)&=&f(x,y,z),\quad(x,y,z)\in G_2,\qquad
  \\\nonumber
  v(x,y,z) &=& 0,\qquad(x,y,z)\in \partial G_2,
\end{eqnarray}
are called the first kind limit problems.

We introduce function spaces for the problem (\ref{limit problems
kind 1 in G2}) in $G_2$. Denote by $O_1$ and  $O_2$ the conical points of the boundary $\partial G_2$ and  by $\phi_1$ and $\phi_2$ smooth real functions on the closure $\overline{G_2}$ of $G_2$ such that $\phi_j=1$ in a neighborhood of $O_j$ while $\phi_1^2+ \phi_2^2=1$. For $l=0, 1, 2$ and $\gamma \in \mathbb R$, we denote by
$V^l_{\gamma}(G_2)$ the completion in the norm
\begin{equation}\label{2.2}
\|v; V^l_{\gamma}(G_2)\|
=\left(\int_{G_2}\sum_{|\alpha|=0}^{l}\sum_{j=1}^{2}\phi_j^2(x,y,z)r_j^{2(\gamma-l+
|\alpha|)}|\partial^{\alpha}v(x,y,z)|^2\,dx\,dy\,dz \right)^{1/2}
\end{equation}
of the set of smooth functions on $\overline{G_2}$  vanishing near $O_1$ and $O_2$; here $r_j$ is the distance between the points $(x,y,z)$ and $O_j$, $\alpha=(\alpha_1, \alpha_2, \alpha_3)$ is the multiindex, and
$\partial^\alpha =\partial^{|\alpha|}/\partial x^{\alpha_1}
\partial y^{\alpha_2}\partial z^{\alpha_3}$.

Let $K_j$ be the tangent cone to $\partial G_2$ at $O_j$ and $S(K_j)$ the domain that $K_j$ cuts out on the unit sphere centered at $O_j$. We denote by $\mu_1(\mu_1+1)$ and $\mu_2(\mu_2+1)$ the first and second eigenvalues of the Dirichlet problem for the Laplace-Beltrami operator in $S(K_1)$, $0< \mu_1(\mu_1+1) < \mu_2(\mu_2+1)$. Moreover, let $\Phi_1$ stand for an eigenfunction corresponding to $\mu_1(\mu_1+1)$ and normalized by     
$$
    (2\mu_{1}+1)\int_{S(K_1)}|\Phi_{1}(\varphi)|^2d\varphi=1.
$$
The next proposition follows from the general results, e.g. see
\cite[Chapters~2 and 4, \S\S 1--3]{NP} or \cite[v.~1, Chapter~1]{MNP}.
\begin{prop}\label{prop2.1}
Assume that  $|\gamma-1|<\mu_1+1/2$.
Then for  $f \in V^0_{\gamma}(G_2)$ and \vadjust{\vspace{3pt}}
any  $k^2$ except the positive increasing sequence  \/ $\{k^2_p\}_{p=1}^\infty$ of eigenvalues 
$k^2_p \to \infty$, there exists a unique solution  $v \in
V^2_{\gamma}(G_2)$ to the problem \/ {\rm(\ref{limit problems kind 1 in G2})} in
$G_2$. The estimate
\begin{equation}
\label{2.3} \|v; V^2_{\gamma}(G_2)\|\leq c\|f; V^0_{\gamma}(G_2)\|
\end{equation}
holds with a constant  $c$ independent of $f$. If $f$ vanishes in a neighborhood of $O_1$ and $O_2$, then $v$ admits the asymptotics
$$
v(x,y,z)=\left\{%
\begin{array}{ll}
    b_1r_1^{-1/2}\widetilde{J}_{\mu_1+1/2}(kr_1)\Phi_1(\varphi_1)+O\bigl(r_1^{\mu_2+1/2}\bigr), & \hbox{$r_1\rightarrow0$;} \\
    b_2r_2^{-1/2}\widetilde{J}_{\mu_1+1/2}(kr_2)\Phi_1(-\varphi_2)+O\bigl(r_2^{\mu_2+1/2}\bigr), & \hbox{$r_2\rightarrow0$} \\
\end{array}%
\right.
$$
near $O_1$ and $O_2$, where \/ $(r_j,\varphi_j)$ are "polar coordinates" centered at  $O_j$, $r_j>0$ and
and $\varphi_j \in S(K_j)$; $ b_j$ are certain constants; $\widetilde{J}_{\mu}$ denotes the Bessel function multiplied by a constant such that
$\widetilde{J}_{\mu}(kr)= r^{\mu}+o(r^{\mu})$. 

Let $k^2=k^2_0$ be an eigenvalue of problem
\/ {\rm(\ref{limit problems kind 1 in G2})}, then the problem \/ {\rm(\ref{limit problems kind
1 in G2})} is solvable if and only if \/ $(f,v_0)_{G_2}=0$ for any eigenfunction 
$v_0$ corresponding to $k^2_0$. Under such conditions there exists a unique solution   $v$ to problem \/
{\rm(\ref{limit problems kind 1 in G2})} that is orthogonal to all these eigenfunctions and satisfies  
\/ {\rm(\ref{2.3})}.
\end{prop}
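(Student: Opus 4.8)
The plan is to read problem (\ref{limit problems kind 1 in G2}) as an elliptic boundary value problem in the bounded domain $G_2$ with the two conical points $O_1,O_2$, and to apply the Kondratiev calculus in the weighted spaces $V^l_\gamma$ exactly as in \cite[Chs.~2,4]{NP} and \cite[v.~1, Ch.~1]{MNP}. The first step is the local analysis at $O_j$. By the hypotheses of Section~\ref{s3new} the vector potential $\textbf{A}'$ and the field $H$ vanish near each narrow, so there the operator is simply $\Delta+k^2$; freezing coefficients and passing to the tangent cone $K_j$, the term $k^2v$ (and the absent magnetic terms) are of lower order and drop out of the model operator, leaving the pencil of the Laplacian on $K_j$. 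Substituting $v=r^\lambda\Phi(\varphi)$ in spherical coordinates $(r,\varphi)$, $\varphi\in S(K_j)$, this pencil becomes $\Delta_{S(K_j)}+\lambda(\lambda+1)$ under the Dirichlet condition, so its spectrum is $\{\mu_m\}\cup\{-(\mu_m+1)\}$, $m=1,2,\dots$, where $\mu_m(\mu_m+1)$ are the Dirichlet eigenvalues of $-\Delta_{S(K_j)}$.

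Second, I would invoke the Fredholm criterion of the theory: the operator of (\ref{limit problems kind 1 in G2}) acting from $V^2_\gamma(G_2)$ to $V^0_\gamma(G_2)$ is Fredholm iff the weight line $\operatorname{Re}\lambda=1/2-\gamma$ contains no point of the pencil spectrum. Since $\mu_1>0>-1/2>-(\mu_1+1)$, the widest strip about the symmetry axis $\operatorname{Re}\lambda=-1/2$ free of pencil eigenvalues is $-(\mu_1+1)<\operatorname{Re}\lambda<\mu_1$, and $1/2-\gamma$ lies in it precisely when $|\gamma-1|<\mu_1+1/2$, which is the hypothesis. To fix the index I would use that the problem is formally self-adjoint (the Green formula of Section~\ref{s2new}) and that the admissible interval for $\gamma$ is symmetric about $\gamma=1$; the value $\gamma=1$ puts the critical line on the symmetry axis $\operatorname{Re}\lambda=-1/2$ of the pencil, and since no eigenvalue is crossed while $\gamma$ stays in $(1/2-\mu_1,\mu_1+3/2)$, the index is constant there and equals its self-adjoint value $0$ at $\gamma=1$.

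Third, because $G_2$ is bounded the embedding $V^2_\gamma(G_2)\hookrightarrow V^0_\gamma(G_2)$ is compact, so $k^2\mapsto\mathcal A_\pm(k^2)=(-i\nabla+\textbf{A}')^2\pm H-k^2$ is an analytic family of index-zero Fredholm operators. The absence of pencil eigenvalues in the strip makes the kernel independent of $\gamma$ over the admissible interval, so the exceptional set reduces to the spectrum of the self-adjoint $L_2$-realization (the case $\gamma=1$): a semibounded, compact-resolvent operator whose spectrum is real, discrete, and accumulates only at $+\infty$, i.e.\ the sequence $\{k^2_p\}$. For $k^2$ off this sequence the kernel is trivial, hence by index zero the cokernel is trivial and $\mathcal A_\pm(k^2)$ is an isomorphism, which gives the unique solvability and the estimate (\ref{2.3}). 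At $k^2=k_0^2$ self-adjointness yields the solvability condition $(f,v_0)_{G_2}=0$, and the representative orthogonal to the kernel is the solution satisfying (\ref{2.3}).

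Finally, for the asymptotics I would use that, when $f$ vanishes near $O_j$, the solution obeys $\Delta v+k^2v=0$ there, so lowering $\gamma$ past the eigenvalue $\lambda=\mu_1$ and applying the decomposition theorem for solutions across a pencil eigenvalue splits $v$ into one explicit singular term plus a remainder lying in a better weighted space. That singular term must solve the homogeneous Helmholtz equation with angular part $\Phi_1$; separating variables gives the radial factor $r^{-1/2}J_{\mu_1+1/2}(kr)$, which after the normalization of $\widetilde J$ produces the stated leading terms $b_jr_j^{-1/2}\widetilde J_{\mu_1+1/2}(kr_j)\Phi_1(\pm\varphi_j)$, the remainder being governed by the next pencil exponent $\mu_2$ and hence of the order claimed. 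I expect this last step to be the main obstacle: one must upgrade the crude membership $v\in V^2_\gamma(G_2)$ to the sharp expansion, matching the pencil power $r^{\mu_1}$ not to a bare monomial but to the entire Bessel profile coming from the exact radial Helmholtz ODE, and controlling the remainder uniformly; the index computation via self-adjointness is the only other point requiring care.
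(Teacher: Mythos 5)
Your proposal is correct and takes essentially the same approach as the paper: the paper proves Proposition~\ref{prop2.1} simply by citing the general theory of elliptic boundary value problems in domains with conical points (\cite[Chapters~2 and 4, \S\S 1--3]{NP}, \cite[v.~1, Chapter~1]{MNP}), and your argument --- operator pencil analysis at $O_j$, the Fredholm property for $|\gamma-1|<\mu_1+1/2$ with index zero via self-adjointness and the symmetry of the admissible weight interval, identification of the exceptional set with the discrete spectrum of the self-adjoint $L_2$-realization, and the Bessel-profile asymptotics when $f$ vanishes near the conical points --- is exactly the content of those cited results. Nothing to flag.
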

We turn to problems (\ref{limit problems kind 1}) for $j=1,3$. Let
$\chi_{0, j}$ and $\chi_{\infty, j}$ be smooth real functions on the closure 
$\overline{G}_j$ of $G_j$ such that
$\chi_{0,j}=1$ in a neighborhood of $O_j$, $\chi_{0, j}=0$
outside a compact set, and  $\chi_{0, j}^2 +
\chi_{\infty, j}^2=1$. We also assume that the support  ${\rm
supp}\chi_{\infty, j}$ is in the cylindrical part of 
$G_j$. For $\gamma \in \mathbb R$, $\delta
>0$, and $l=0, 1, 2$, the space  $V^l_{\gamma,\, \delta}(G_j)$
is the completion in the norm
\begin{equation}
\label{2.4} \|v; V^l_{\gamma,\, \delta}(G_j)\|
=\left(\int_{G_j}\sum_{|\alpha|=0}^{l}\,\bigl(\chi_{0,j}^2r_j^{2(\gamma-l+
|\alpha|)}+ \chi_{\infty, j}^2\exp(2\delta x)\bigr)
|\partial^{\alpha}v|^2\,dx\,dy\,dz\right)^{1/2}
\end{equation}
of the set of functions  with compact support smooth on   $\overline
{G}_j$ and equal to zero in a neighborhood of  $O_j$.

By assumption,  $k^2$ is between the first and second thresholds, so in every domain $G_j$ there is only one outgoing wave; let $U_1^-=U_2$ be the outgoing wave in $G_1$ and $U_2^-=U_1$
that in $G_3$ (the definition of the waves $U_j$ in $G$ see in Section \ref{s2new}). The next proposition follows from Theorem 5.3.5 in \cite{NP}.
\begin{prop}\label{prop2.2}
Let \/ $|\gamma-1|<\mu_1+1/2$ and let the homogeneous problem \/
{\rm(\ref{limit problems kind 1})} (with  $f=0$) have no nontrivial solutions 
in $V^2_{\gamma,\, 0}(G_j)$. Then for any right-hand side  $f\in V^0_{\gamma,\, \delta}(G_j)$ 
there exists a unique solution  $v$ to the problem \/ {\rm(\ref{limit problems kind 1})} that admits the  representation \[ v =u +A_j\chi_{\infty, j}U^-_j,
\]
where $A_j=const$, $u \in V^2_{\gamma,\, \delta}(G_j)$ and  $\delta$
is sufficiently small. Moreover there holds the estimate
\begin{equation}
\label{estimate lim probl 1 Gj}
\|u; V^2_{\gamma,\, \delta}(G_j)\|
+|A_j|\leq c\|f; V^0_{\gamma,\, \delta}(G_j)\|,
\end{equation}
with a constant $c$ independent of $f$. If the function $f$ vanishes in a neighborhood of  $O_j$, then  the solution  $v$ in $G_1$ admits the decomposition 
\[ v(x,y, z)=a_1r_1^{-1/2}\widetilde{J}_{\mu_1+1/2}(kr_1)\Phi_1(-\varphi_1)+O\big(r_1^{\mu_2+1/2}\big),\quad
r_1\rightarrow 0,
\]
and for the solution in $G_3$ there holds
\[
v(x,y)=a_2r_2^{-1/2}\widetilde{J}_{\mu_1+1/2}(kr_2)\Phi_1(\varphi_2)+O\big(r_2^{\mu_2+1/2}\big),\quad
r_2\rightarrow 0,
\]
where $a_j$ are certain constants and $\mu_l$ are the same as in the preceding proposition.
\end{prop}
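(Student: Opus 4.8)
The plan is to realize the boundary value problem (\ref{limit problems kind 1}) in $G_j$ as a single Fredholm operator acting on the pair $(u,A_j)$ that appears in the representation $v=u+A_j\chi_{\infty,j}U_j^-$, and then to annihilate its kernel and cokernel using the hypothesis together with an energy-flux identity. This is the general scheme for elliptic problems in domains carrying simultaneously a conical point and a cylindrical outlet (Theorem 5.3.5 in \cite{NP}). The domain $G_j$ has exactly these two singularities: the vertex $O_j$, governed by the Kondrat'ev weight $r_j^{\gamma-2+|\alpha|}$, and the cylindrical end, governed by the exponential weight $\exp(2\delta x)$ in (\ref{2.4}), whose sign is chosen so that the weight grows toward the cylindrical infinity and thus forces exponential decay of the remainder $u$. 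Accordingly I would study
\[
\mathcal A\colon(u,A_j)\longmapsto\bigl(\Delta+k^2\bigr)\bigl(u+A_j\chi_{\infty,j}U_j^-\bigr),\qquad\mathcal A\colon V^2_{\gamma,\delta}(G_j)\oplus\mathbb C\to V^0_{\gamma,\delta}(G_j),
\]
and construct a parametrix for it by gluing two model resolvents with a partition of unity subordinate to a neighbourhood of $O_j$, the cylindrical end, and a compact middle piece.

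Near $O_j$ the model operator is the Dirichlet Laplacian on the tangent cone $K_j$; by Kondrat'ev theory its Mellin symbol is the shifted Laplace--Beltrami operator on $S(K_j)$, and the associated operator pencil has the exponents $\lambda=\mu_\ell$ and $\lambda=-1-\mu_\ell$. Since a power $r_j^\lambda$ belongs to $V^2_\gamma$ precisely when $\operatorname{Re}\lambda>1/2-\gamma$, the hypothesis $|\gamma-1|<\mu_1+1/2$ is exactly the requirement $-1-\mu_1<1/2-\gamma<\mu_1$, i.e. the weight line carries no pencil eigenvalue. This gives local invertibility near $O_j$, a Kondrat'ev-type estimate, and, when $f$ vanishes near $O_j$, the asymptotics: the only admissible singular term is $a_j\,r_j^{-1/2}\widetilde J_{\mu_1+1/2}(kr_j)\Phi_1(\pm\varphi_j)\sim a_j\,r_j^{\mu_1}\Phi_1$, while the next exponent $\mu_2$ produces the remainder $O(r_j^{\mu_2+1/2})$, exactly as in Proposition \ref{prop2.1}. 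Since the weight interval is symmetric about $\gamma=1$ and $\Delta+k^2$ is formally self-adjoint, the conical contribution to the index is zero.

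On the cylindrical end the model problem is $-\Delta-k^2$ on $\mathbb R_\pm\times D$ with Dirichlet data; separating variables through (\ref{problem on D}) gives the modes $e^{\pm i\nu_\ell x}\Psi_\ell$ with $\nu_\ell=\sqrt{k^2-\lambda_\ell^2}$. Because $k^2\in(\lambda_1^2,\lambda_2^2)$, only $\nu_1$ is real: every mode with $\ell\ge2$ decays exponentially and lands in the exponentially weighted space, whereas the propagating pair $e^{\pm i\nu_1x}\Psi_1$ does not. Using the partial Fourier transform in $x$ one solves this model explicitly, the evanescent part lying in $V^2_{\gamma,\delta}$ and the single open channel forcing the addition of the outgoing wave $\chi_{\infty,j}U_j^-$ with one scalar amplitude. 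The exponential weight makes the operator Fredholm but with a deficiency equal to the number of propagating channels, here one; adjoining the free constant $A_j$ together with the radiation (outgoing-only) condition restores index zero. Matching this single free constant against the single propagating mode, so that $\mathcal A$ has index $0$ and the space $V^2_{\gamma,\delta}(G_j)\oplus\mathbb C$ is exactly the right domain, is the step I expect to be the main obstacle.

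Gluing the two parametrices yields $\mathcal A R=I+K$ with $K$ compact, so $\mathcal A$ is Fredholm of index $0$. It remains to kill the kernel. If $(u,A_j)\in\ker\mathcal A$, then $v=u+A_j\chi_{\infty,j}U_j^-$ solves the homogeneous problem, and the flux $\int_\Sigma\operatorname{Im}(\bar v\,\partial_n v)$ through a cross-section $\Sigma$ is conserved because $k^2$ is real; evaluating it far out, where $v\approx A_jU_j^-$ up to exponentially small terms, gives $\pm|A_j|^2$ by the normalization (\ref{eigenfunctions}), while near $O_j$ and on $\partial G_j$ it vanishes. Hence $A_j=0$, so $v=u\in V^2_{\gamma,\delta}(G_j)\subset V^2_{\gamma,0}(G_j)$ is a homogeneous solution, which the hypothesis forces to be trivial; thus $\ker\mathcal A=0$. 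With index $0$ this makes $\mathcal A$ a bijection, so the open-mapping theorem yields the unique solvability together with the estimate (\ref{estimate lim probl 1 Gj}); the stated decompositions near $O_j$ are then read off from the conical parametrix once $f$ vanishes there.
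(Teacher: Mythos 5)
Your proposal is correct, but the comparison with the paper is unusual: the paper gives no proof at all. Proposition \ref{prop2.2} is dispatched with a single sentence, ``follows from Theorem 5.3.5 in \cite{NP}'', that theorem being exactly the general Fredholm/solvability statement for elliptic problems in domains with conical points and cylindrical outlets under radiation conditions. What you have written is, in effect, a sketch of the proof of that cited theorem specialized to $G_j$: the augmented operator $(u,A_j)\mapsto(\Delta+k^2)(u+A_j\chi_{\infty,j}U_j^-)$, the parametrix glued from the Kondrat'ev model at the vertex and the cylinder model at the outlet, index zero after adjoining the single outgoing amplitude, and the flux identity that kills the kernel. Your translation of $|\gamma-1|<\mu_1+1/2$ into the absence of pencil eigenvalues on the weight line ($-1-\mu_1<1/2-\gamma<\mu_1$) is exactly right, and so is your reading of the exponential weight as growing toward the outlet: the paper's norm (\ref{2.4}) is written carelessly with $e^{2\delta x}$ for both outlets, but the Section 6 norm (\ref{norm G(eps1eps2)}) uses $e^{2\delta|x|}$, confirming your interpretation (with a decaying weight the decomposition $v=u+A_j\chi_{\infty,j}U_j^-$ would not even be unique). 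Your kernel-triviality step via the conserved flux $\mathrm{Im}\int_\Sigma\bar v\,\partial_nv$ is not hypothetical either: it is literally the Green-formula computation the authors themselves perform in Lemma \ref{lem Aa}, so your argument recycles a tool already present in the paper. What the citation buys the authors is brevity and the full strength of the general theory (any number of channels, weighted remainder estimates); what your reconstruction buys is transparency about where each hypothesis enters --- the weight window at the cone, $k^2\in(\lambda_1^2,\lambda_2^2)$ for the single open channel, and the assumed triviality of homogeneous solutions in $V^2_{\gamma,0}(G_j)$ for injectivity. The one step you honestly flag as the main obstacle --- that adjoining one scalar per propagating mode restores index zero --- is indeed the technical heart of \cite[Theorem 5.3.5]{NP}, and a complete write-up would have to invest its work precisely there; as a proposal, though, nothing in your outline would fail.
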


\subsection{ Second kind limit problems.}

In the domains $\Omega_j$, $j=1, 2$, introduced in Section~\ref{s2new}, we consider
the boundary value problems
\begin{equation} \label{limit problems kind 2}
       \triangle w(\xi_j,\eta_j,\zeta_j) = F(\xi_j,\eta_j,\zeta_j),\quad(\xi_j,\eta_j,\zeta_j)\in\Omega_j; \qquad
               w(\xi_j,\eta_j,\zeta_j)  =0, \quad  (\xi_j,\eta_j,\zeta_j)\in\partial\Omega_j,
\end{equation}
which are called the second kind limit problems; here $(\xi_j,
\eta_j,\zeta_j)$ denote Cartesian coordinates with origin at ~$O_j$.

Let $\rho_j={\rm dist}((\xi_j,\eta_j,\zeta_j), O_j)$ and let $\psi_{0,
j}$, $\psi_{\infty, j}$ be smooth real functions on
$\overline{\Omega}_j$ such that $\psi_{0, j} = 1$ for $\rho_j
<N/2$, $\psi_{0, j} =0$ for $\rho_j>N$, and  $\psi_{0, j}^2+
\psi_{\infty, j}^2=1$ with sufficiently large positive $N$. For $\gamma \in \mathbb R$ and $l=0, 1, 2$, the space
$V^l_{\gamma}(\Omega_j)$ is the completion in the norm
\begin{equation} \label{2.7}
\|v; V^l_{\gamma}(\Omega_j)\|
=\left(\int_{\Omega_j}\sum_{|\alpha|=0}^{l}\,\bigl(\psi_{0,j}^2+
\psi_{\infty, j}^2\rho_j^{2(\gamma-l+ |\alpha|)}\bigr)
|\partial^{\alpha}v|^2\,d\xi_jd\eta_jd\zeta_j\right)^{1/2}
\end{equation}
of the set   $C^\infty_c(\overline{\Omega}_j)$ of smooth functions with compact support in
$\overline {\Omega}_j$. The next proposition is a corollary of  Theorem 4.3.6 in  \cite{NP}.
\begin{prop}\label{prop2.3}
Assume that \/ $|\gamma-1|<\mu_1+1/2$. Then for   $F\in
V^0_{\gamma}(\Omega_j)$ there exists a unique solution  $w \in
V^2_{\gamma}(\Omega_j)$ of the problem \/ {\rm(\ref{limit problems kind 2})}
such that
\begin{equation} \label{estimate lim probl 2}
\|w;
V^2_{\gamma}(\Omega_j)\|\leq c\|F; V^0_{\gamma}(\Omega_j)\|,
\end{equation}
with a constant $c$  independent of  $F$. If $F\in
C^\infty_c(\overline{\Omega}_j)$, then the function $w$ is smooth on
\/ $\overline{\Omega}_j$ and admits the representation
\begin{equation}
\label{asympt w}
w(\xi_j,\eta_j,\zeta_j)=\begin{cases}
    \alpha_j\rho_j^{-\mu_1-1}\Phi_1(-\varphi_j)+O\bigl(\rho_j^{-\mu_2-1}\bigr),  & \xi_j<0,  \\
    \beta_j\rho_j^{-\mu_1-1}\Phi_1(\varphi_j)+O\bigl(\rho_j^{-\mu_2-1}\bigr),  & \xi_j>0,
\end{cases}
\end{equation}
with $\rho_j\rightarrow\infty$; here  $(\rho_j,\varphi_j)$ are polar coordinates on
\/ $\Omega_j$ centered at  $O_j$ while $\mu_l$ and $\Phi_1$ are the same as in Proposition
\/~{\rm\ref{prop2.1}}. The constants 
$\alpha_j$ and  $\beta_j$ are given by
\[
\alpha_j=-(F,w_j^l)_{\Omega},\qquad\beta_j=-(F,w_j^r)_{\Omega},
\]
where  $w_j^l$ and $w_j^r$ are unique solutions to the homogeneous problem \/
{\rm(\ref{limit problems kind 2})} that satisfy, for $\rho_j \to
\infty$, the conditions
\begin{align} \label{asymp wjK}
  w_j^l  = \begin{cases}
    \left(\rho_j^{\mu_1}+\alpha\rho_j^{-\mu_1-1}\right)\Phi_1(-\varphi_j)+O\bigl(\rho_j^{-\mu_2-1}\bigr),  & \xi_j<0;  \\
    \beta\rho_j^{-\mu_1-1}\Phi_1(\varphi_j)+O\bigl(\rho_j^{-\mu_2-1}\bigr),  &
    \xi_j>0;
\end{cases}
\\ \label{asymp wjL}
  w_j^r  = \begin{cases}
    \beta\rho_j^{-\mu_1-1}\Phi_1(-\varphi_j)+O\bigl(\rho_j^{-\mu_2-1}\bigr),  & \xi_j<0; \\
    \left(\rho_j^{\mu_1}+\alpha\rho_j^{-\mu_1-1}\right)\Phi_1(\varphi_j)+O\bigl(\rho_j^{-\mu_2-1}\bigr),\qquad  &
    \xi_j>0.
\end{cases}
\end{align}
The coefficients  $\alpha$ and $\beta$ depend only on the domain $\Omega$.
\end{prop}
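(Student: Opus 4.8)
The plan is to deduce the statement from the general theory of elliptic boundary value problems in domains with conical exits to infinity, as formulated in the cited Theorem 4.3.6 of \cite{NP}. The domain $\Omega_j$ coincides with the double cone $K_j$ outside a large ball, so it has two conical outlets to infinity, reached along $\xi_j>0$ and $\xi_j<0$ and corresponding to the two components of $K_j\cap S^2$. The model operator at each outlet is governed by the Laplace--Beltrami operator on $S(K_j)$, whose eigenvalues $\mu_m(\mu_m+1)$ produce the critical exponents $\rho^{\mu_m}$ and $\rho^{-\mu_m-1}$ of the separated harmonic solutions $\rho^{\lambda}\Phi_m(\varphi)$ of $\Delta w=0$.

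First I would verify that the weight restriction $|\gamma-1|<\mu_1+1/2$, that is $1/2-\mu_1<\gamma<3/2+\mu_1$, keeps $\gamma$ free of the critical weights $1/2-\mu_m$ and $3/2+\mu_m$ generated by the exponents $\rho^{\mu_m}$ and $\rho^{-\mu_m-1}$, and in fact places $\gamma$ in the strip in which the decaying solution $\rho^{-\mu_1-1}$ is $V^2_{\gamma}(\Omega_j)$-admissible while the growing solution $\rho^{\mu_1}$ is not. In this strip the general theory gives that $w\mapsto\Delta w$ with Dirichlet data is an isomorphism from $V^2_{\gamma}(\Omega_j)$ onto $V^0_{\gamma}(\Omega_j)$: the absence of admissible exponents on the weight line makes the operator Fredholm, and the exclusion of the growing solutions at both outlets forces trivial kernel and cokernel. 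This yields existence, uniqueness, and the estimate (\ref{estimate lim probl 2}) at once. For $F\in C^\infty_c(\overline{\Omega}_j)$ the solution is smooth by interior and boundary elliptic regularity, and the asymptotic theory for these weighted problems expands $w$ at each outlet in the decaying exponents $\rho^{-\mu_m-1}\Phi_m$; since $\gamma<\mu_1+3/2$, the leading term is exactly $\rho^{-\mu_1-1}\Phi_1$ with its own coefficient on each outlet, which is the form (\ref{asympt w}) with remainder $O(\rho^{-\mu_2-1})$.

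The last and most delicate step is the identification of $\alpha_j$ and $\beta_j$. I would apply Green's formula to $w$ and the special homogeneous solution $w_j^l$ (respectively $w_j^r$) on the truncated domain $\Omega_j\cap\{\rho_j<R\}$. The Dirichlet condition annihilates the boundary terms on $\partial\Omega_j$, leaving only the integrals over the spherical caps $\{\rho_j=R\}$ at the two outlets, while the left-hand side tends to $(F,w_j^l)_\Omega$. On the $\xi_j<0$ outlet the pairing of the decaying part $\alpha_j\rho^{-\mu_1-1}\Phi_1$ of $w$ with the growing part $\rho^{\mu_1}\Phi_1$ of $w_j^l$ produces, after the $\rho$-derivatives and the surface factor $R^2\,d\varphi$, a contribution independent of $R$ equal to $-\alpha_j(2\mu_1+1)\int_{S(K_j)}|\Phi_1|^2\,d\varphi=-\alpha_j$ by the stated normalization of $\Phi_1$; all other products (two decaying terms on the $\xi_j>0$ outlet, or the higher-order remainders) vanish as $R\to\infty$. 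Passing to the limit yields $\alpha_j=-(F,w_j^l)_\Omega$, and the identical computation with $w_j^r$ gives $\beta_j$.

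The main obstacle is precisely this limiting argument: one must control the remainders in the asymptotics of $w$ and of $w_j^l,w_j^r$ uniformly on the caps $\{\rho_j=R\}$ and check that among all products of exponents exactly the cross term of the growing $\rho^{\mu_1}$ and decaying $\rho^{-\mu_1-1}$ solutions survives the limit. This is where the symplectic factor $2\mu_1+1$ and the particular normalization of $\Phi_1$ enter, and verifying that the surviving term is $R$-independent while the others decay is the computation that makes the coefficient formulas come out in the clean form asserted.
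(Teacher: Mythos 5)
Correct, and essentially the paper's own route: the paper disposes of this proposition by citing Theorem 4.3.6 of \cite{NP}, i.e.\ exactly the weighted-space Fredholm theory for the Dirichlet Laplacian in a domain with two conical outlets that you invoke for the isomorphism, the estimate (\ref{estimate lim probl 2}), and the form of the asymptotics (\ref{asympt w}). Your Green-formula identification of $\alpha_j$ and $\beta_j$ (only the cross term of $\rho_j^{\mu_1}$ against $\rho_j^{-\mu_1-1}$ survives the limit $R\to\infty$, with the factor $(2\mu_1+1)\int_{S(K_j)}|\Phi_1|^2\,d\varphi=1$ giving the clean constants) is the standard argument underlying that citation, and is the same computation the paper itself carries out in the proofs of Lemma \ref{lem Aa} and Lemma \ref{lem ab}.
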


\section{Special solutions of limit problems}\label{s4new}

In each domain  $G_j$, $j=1,2,3$, we introduce special solutions to the homogeneous problems (\ref{limit problems kind 1}). Such solutions will be needed in the next section for constructing the asymptotics of a wave function. From Propositions  \ref{prop2.1} and \ref{prop2.2} it follows that the bounded solutions of the homogeneous problems  (\ref{limit problems kind 1}) are trivial (except the eigenfunctions of the problem in $G_2$), so we will consider solutions unbounded in a neighborhoods  of the points $O_j$.

Let us consider the problem in the cone $K_1$, which is, as in Proposition ~\ref{prop2.1}, 
the tangent cone to $\partial G_2$ at $O_1$: 
\begin{equation} \label{problem in K}
  \Delta u+k^2u=0 \,\,\mbox{in}\, K,\qquad u = 0 \,\,\mbox{on}\,\,\partial K.
\end{equation}
The function 
\begin{equation} \label{vK}
v(r,\varphi)= r^{-1/2}\widetilde{N}_{\mu+1/2}(kr)\Phi_1(\varphi)
\end{equation}
satisfies problem (\ref{problem in K}); here  $\widetilde{N}_{\mu}$ is the Neumann function
multiplied by such a constant that \[
\widetilde{N}_{\mu}(kr)= r^{-\mu}+o(r^{-\mu}),\]
and  $\Phi_1$ is the same function as in Proposition ~\ref{prop2.1}. Let 
$t\mapsto\Theta(t)$ be a cut-off function on $\mathbb{R}$ equal to ~1 for
$t<\delta/2$ and ~0 for $t>\delta$ with a small positive $\delta$. We introduce the solution
\begin{equation} \label{V1}
    \textbf{v}_1(x,y,z)=\Theta(r_1)
    v(r_1,\varphi_1)+\widetilde{v}_1(x,y,z)
\end{equation}
to the homogeneous problem  (\ref{limit problems kind 1}) in $G_1$, whereas
$\widetilde{v}_1$ satisfies  (\ref{limit problems kind 1}) with
$f=-[\triangle,\Theta]v$; the existence of $\widetilde{v}_1$
is provided by Proposition ~\ref{prop2.2}. Thus
\begin{equation} \label{asympt v1 at inf}
\textbf{v}_1(x,y,z)=\begin{cases}
     r_1^{-1/2}\bigl(\widetilde{N}_{\mu_1+1/2}(kr_1)+a\widetilde{J}_{\mu_1+1/2}(kr_1)
    \bigr)\Phi_1(-\varphi_1)+O(r_1^{\mu_2}), & r_1\rightarrow0, \\
    AU_1^-(x,y,z)+O(e^{\delta x}),\quad x\rightarrow -\infty, &   \\
\end{cases}
\end{equation}
where  $\widetilde{J}_{\mu}$ is the same function as in 
Proposition ~\ref{prop2.1} and ~\ref{prop2.2} and the constant $A\neq 0$
depends only on the domain $G_1$.

In the domain $G_3$, we introduce the solution $\textbf{v}_3$ to the homogeneous problem 
(\ref{limit problems kind 1}), 
$\textbf{v}_3(x,y,z):=\textbf{v}_1(d-x,-y,-z)$, where
$d=\mbox{dist}(O_1,O_2)$. Then
\begin{equation} \label{asympt v3 at inf}
\textbf{v}_3(x,y,z)=\begin{cases}
     r_2^{-1/2}\bigl(\widetilde{N}_{\mu_1+1/2}(kr_2)+a\widetilde{J}_{\mu_1+1/2}(kr_2)
    \bigr)\Phi_1(\varphi_2)+O(r_2^{\mu_2}), & r_2\rightarrow0, \\
    Ae^{-i\nu_1 d}U_2^-(x,y,z)+O(e^{-\delta x}),\quad x\rightarrow+\infty. &   \\
\end{cases}
\end{equation}

\begin{lem}\label{lem Aa}
There holds the equality\/ $|A|^2={\rm Im}\, a$.\end{lem}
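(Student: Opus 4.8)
The plan is to extract the identity $|A|^2 = \operatorname{Im} a$ from a Green-type (energy-flux) argument applied to the solution $\mathbf{v}_1$ in $G_1$. The key observation is that $\mathbf{v}_1$ solves the homogeneous Helmholtz equation $\Delta \mathbf{v}_1 + k^2 \mathbf{v}_1 = 0$ in $G_1$ away from $O_1$, with the Dirichlet condition on $\partial G_1$, and that its two asymptotic regimes in (\ref{asympt v1 at inf}) — the singular behavior at the conical point $O_1$ governed by $a$, and the radiating behavior $A U_1^-$ at $x \to -\infty$ — are linked by conservation of a bilinear flux. Concretely, I would apply the Green formula to the pair $(\mathbf{v}_1, \overline{\mathbf{v}_1})$ on the truncated domain $G_1^{\tau} = \{(x,y,z) \in G_1 : r_1 > \tau,\ x > -T\}$, obtained by excising a small sphere $\{r_1 = \tau\}$ around $O_1$ and truncating the cylindrical outlet at $x = -T$. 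Since $\Delta \mathbf{v}_1 + k^2\mathbf{v}_1 = 0$ and $\mathbf{v}_1 = 0$ on the lateral boundary, the volume terms and lateral boundary terms vanish, leaving only the two integrals
\begin{equation*}
\int_{\{r_1 = \tau\}} \Bigl( \mathbf{v}_1\, \partial_n \overline{\mathbf{v}_1} - \overline{\mathbf{v}_1}\, \partial_n \mathbf{v}_1 \Bigr)\, dS + \int_{\{x = -T\}} \Bigl( \mathbf{v}_1\, \partial_n \overline{\mathbf{v}_1} - \overline{\mathbf{v}_1}\, \partial_n \mathbf{v}_1 \Bigr)\, dS = 0,
\end{equation*}
where $\partial_n$ is the outward normal derivative on each piece.

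Next I would evaluate each integral in the limit $\tau \to 0$ and $T \to \infty$. For the cylindrical cross-section at $x = -T$, I substitute the asymptotics $\mathbf{v}_1 = A U_1^- + O(e^{\delta x})$, where $U_1^- = U_2 = e^{-i\nu_1 x}\Psi_1(y,z)$; the remainder decays and contributes nothing in the limit, and using the normalization (\ref{eigenfunctions}), $\nu_1 \int_D |\Psi_1|^2 = 1$, the flux through this cross-section reduces to a constant multiple of $|A|^2$. The relevant computation is standard: for a wave $A e^{-i\nu_1 x}\Psi_1$, the quantity $\mathbf{v}_1 \partial_x \overline{\mathbf{v}_1} - \overline{\mathbf{v}_1}\partial_x \mathbf{v}_1$ integrated over $D$ yields $2i\nu_1 |A|^2 \int_D |\Psi_1|^2 = 2i|A|^2$ up to the correct sign coming from the outward normal $n = -\partial_x$ at $x = -T$. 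For the small sphere $\{r_1 = \tau\}$, I substitute the singular asymptotics, writing $\mathbf{v}_1 \sim r_1^{-1/2}\bigl(\widetilde{N}_{\mu_1+1/2}(kr_1) + a\widetilde{J}_{\mu_1+1/2}(kr_1)\bigr)\Phi_1(-\varphi_1)$, where $\widetilde{J}$ has leading term $r_1^{\mu_1}$ (real normalization) and $\widetilde{N}$ has leading term $r_1^{-\mu_1}$. Because $\Phi_1$ is real and normalized by $(2\mu_1+1)\int_{S(K_1)}|\Phi_1|^2 = 1$, the self-flux of the real Neumann part and the real Bessel part each vanish, and only the cross term between $\widetilde{N}$ and $a\widetilde{J}$ (minus its conjugate, carrying $a - \bar a = 2i\operatorname{Im}a$) survives; the Wronskian-type computation of $\widetilde{N}$ against $\widetilde{J}$ over the sphere, combined with the normalizations, produces $-2i\operatorname{Im}a$ in the limit $\tau \to 0$.

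Assembling the two contributions gives $2i|A|^2 - 2i\operatorname{Im}a = 0$ (up to an overall sign I would fix by careful bookkeeping of the outward normals), which is exactly $|A|^2 = \operatorname{Im}a$. I expect the main obstacle to be the precise evaluation of the spherical cross-term integral near $O_1$: one must verify that the angular factor integrates to $1$ via the eigenfunction normalization, and that the radial Wronskian of the normalized $\widetilde{N}_{\mu_1+1/2}$ and $\widetilde{J}_{\mu_1+1/2}$ (after the $r^{-1/2}$ prefactor converting the 3D radial operator to a Bessel operator in $kr_1$) yields exactly the constant $1$ with no stray factors of $k$ or $2\mu_1+1$. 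This is where the specific choice of normalizations — $\widetilde{J}_\mu(kr) = r^\mu + o(r^\mu)$, $\widetilde{N}_\mu(kr) = r^{-\mu} + o(r^{-\mu})$, and the $(2\mu_1+1)$ factor in the $\Phi_1$ normalization — must conspire to cancel cleanly. The flux through the outlet and the decay estimates on the remainders are routine given the exponential control $O(e^{\delta x})$ and the higher-order error $O(r_1^{\mu_2})$ with $\mu_2 > \mu_1$.
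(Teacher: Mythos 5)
Your proposal is correct and follows essentially the same route as the paper: a Green (flux) identity on the truncated domain $G_1\cap\{x>-N\}\cap\{r_1>\delta\}$, with the outlet contribution evaluated via the radiation asymptotics and the normalization $\nu_1\int_D|\Psi_1|^2\,dy\,dz=1$ giving $|A|^2$, and the small-sphere contribution evaluated via the conical asymptotics, where the $\widetilde{N}$--$\widetilde{J}$ Wronskian cross term and the normalization $(2\mu_1+1)\int_{S(K_1)}|\Phi_1|^2\,d\varphi=1$ give $-\operatorname{Im}a$. The sign and Wronskian bookkeeping you flag as the remaining work is exactly what the paper's computation carries out, with no further ideas needed.
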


\begin{proof}
Let $(u, v)_Q$ stand for the integral $\int_Q
u\overline{v}\,dx\,dy\,dz$, and let $G_{N,\,\delta}$ be the truncated domain
$G_1\cap\{x>-N\}\cap\{r_1>\delta\}$. By the Green formula
\begin{eqnarray*}
  &&0=(\triangle\textbf{v}_1+k^2\textbf{v}_1,\textbf{v}_1)_{G_{N,\,\delta}}-
(\textbf{v}_1,\triangle\textbf{v}_1+k^2\textbf{v}_1)_{G_{N,\,\delta}}\\&&=
(\partial\textbf{v}_1/\partial n,\textbf{v}_1)_{\partial
G_{N,\,\delta}} -(\textbf{v}_1,\partial \textbf{v}_1/\partial
n)_{\partial G_{N,\,\delta}} = 2i\,{\rm
Im}\,(\partial\textbf{v}_1/\partial n,\textbf{v}_1)_E,
\end{eqnarray*}
where $E= (\partial G_{N,\,\delta}\cap \{x=-N\})\cup (\partial
G_{N,\,\delta}\cap \{r_1=\delta \})$. Taking into account 
(\ref{asympt v1 at inf}) for $x\rightarrow+\infty$ and
(\ref{eigenfunctions}), we have
\begin{eqnarray*} {\rm Im}\,
(\partial\textbf{v}_1/\partial n,\textbf{v}_1)_{\partial
G_{N,\,\delta}\cap \{x=-N\}}=-{\rm Im}\,
\int_DA\dfrac{\partial U_1^-}{\partial
x}(x,y,z)\overline{AU_1^-}(x,y,z)\Big|_{x=-N}dy\,dz+o(1)\\
   = |A|^2\nu_1\int_{-l/2}^{l/2}|\Psi_1(y,z)|^2dy\,dz+o(1)=
   |A|^2+o(1).
\end{eqnarray*}
With (\ref{asympt v1 at inf}) as $r_1\rightarrow 0$ and  the definition of
$\Phi_1$ (see Proposition ~\ref{prop2.1}), we obtain
\begin{align*} {\rm Im}\, (\partial\textbf{v}_1/\partial
n,\textbf{v}_1)_{\partial G_{N,\,\delta}\cap \{r_1=\delta \}} = {\rm
Im}\int_{S(K)}\,\biggl[-\dfrac{\partial}{\partial
r_1}r_1^{-1/2}
\bigl(\widetilde{N}_{\mu_1+1/2}(kr_1)+a\widetilde{J}_{\mu_1+1/2}(kr_1)
    \bigr)\biggr]   \\
     \times
    r_1^{-1/2}\bigl(\widetilde{N}_{\mu_1+1/2}(kr_1)+\overline{a}\widetilde{J}_{\mu_1+1/2}(kr_1)
    \bigr)|\Phi_1(-\varphi_1)|^2r_1^2\Big|_{r_1=\delta}d\varphi_1+o(1)   \\
    = -({\rm Im}\, a)(2\mu_1+1)\int_{S(K)}|\Phi(-\varphi_1)|^2d\varphi_1+o(1)=
   -{\rm Im}\,a+o(1).
\end{align*}
Thus $|A|^2-{\rm Im}\,a+o(1)=0$ as $N\rightarrow\infty$ and
$\delta\rightarrow0$.\qquad\end{proof}

Let  $k^2_{0,\pm}$ be a simple eigenvalue of the problem  (\ref{limit problems kind 1 in G2}) in the resonator  $G_2$ and $v_0^\pm$ is an eigenfunction corresponding to
 $k^2_{0,\pm}$ and normalized by the condition
$\int_{G_2}|v_0^\pm|^2dx\,dy\,dz=1$. By virtue of Proposition ~\ref{prop2.1}
\begin{equation} \label{asympt of
eigenfunction} v_0^\pm(x, y, z)\sim\begin{cases}
    b_1^\pm r_1^{-1/2}\widetilde{J}_{\mu_1+1/2}(k_{0,\pm}r_1)\Phi(\varphi_1), & r_1\rightarrow0, \\
    b_2^\pm r_2^{-1/2}\widetilde{J}_{\mu_1+1/2}(k_{0,\pm}r_2)\Phi(-\varphi_2), & r_2\rightarrow0.  \\
\end{cases}
\end{equation}
We consider that  $b_j^\pm\neq0$. If  $H=0$, then it is true, for instance, for the eigenfunctions corresponding to the minimal eigenvalue of the resonator. For nonzero $H$ this condition can be violated 
owing to the Aharonov-Bohm effect; here we do not discuss this phenomenon. For $k^2$ in a punctured neighborhood of $k^2_{0,\pm}$ separated from the other eigenvalues,
we introduce the solutions  $v_{0j}^\pm$ to the homogeneous problem  (\ref{limit
problems kind 1 in G2}) by the relations 
\begin{equation} \label{vOj}
    v_{0j}^\pm(x,y,z)=\Theta(r_j)
    v(r_j,\varphi_j)+\widetilde{v}_{0j}^\pm(x,y,z),\qquad j=1,2,
\end{equation}
where $v$ is defined by (\ref{vK}) and  $\widetilde{v}_{0j}^\pm$
is a bounded solution to the problem  (\ref{limit problems kind
1 in G2}) with
$f_j(x,y,z)=-[\triangle,\Theta(r_j)]v(r_j,\varphi_j)$.
\begin{lem}\label{lemma3.2}
In a neighborhood  $V\subset \mathbb C$ of $k^2_{0,\pm}$ containing no eigenvalues of the problem \/ {\rm(\ref{limit problems kind 1 in G2})} in 
$G_2$ distinct from $k^2_{0,\pm}$, there hold the equalities
$\widetilde{v}_{0j}^\pm=-\overline{b_j^\pm}(k^2-k^2_{0,\pm})^{-1}v_0^\pm+\widehat{v}_{0j}^\pm$, where
$b_j^\pm$ are the same as in \/ {\rm(\ref{asympt of eigenfunction})} and the functions
$\widehat{v}_{0j}^\pm$ are analytic in $k^2\in
V$.\end{lem}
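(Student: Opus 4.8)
The plan is to treat $\widetilde v_{0j}^\pm$ as the output of the solution operator furnished by Proposition~\ref{prop2.1} for problem (\ref{limit problems kind 1 in G2}) applied to the right-hand side $f_j=-[\triangle,\Theta(r_j)]v(r_j,\varphi_j)$ of (\ref{vOj}), and to follow its dependence on the spectral parameter $k^2$. Write $A_\pm=(-i\nabla+\mathbf A')^2\pm H$ for the self-adjoint operator associated with (\ref{limit problems kind 1 in G2}), so that the problem reads $(A_\pm-k^2)v=f$. By Proposition~\ref{prop2.1}, for $k^2$ off the eigenvalue sequence the map $f\mapsto v$ is a bounded operator $V^0_\gamma(G_2)\to V^2_\gamma(G_2)$, depending analytically on $k^2$. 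Since $\widetilde J_\mu$ and $\widetilde N_\mu$ are normalized to have $k$-independent leading terms, the function $v$ in (\ref{vK}), hence $f_j$, is analytic in $k^2$ as a $V^0_\gamma(G_2)$-valued map near the positive number $k^2_{0,\pm}$. Thus all of the singular $k^2$-dependence of $\widetilde v_{0j}^\pm$ is inherited from the solution operator, and the task reduces to pole analysis.

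First I would isolate the pole. Choose $V$ so small that $k^2_{0,\pm}$ is its only eigenvalue of $A_\pm$. By the standard analytic Fredholm theory underlying Proposition~\ref{prop2.1}, the solution operator has at worst a simple pole at $k^2_{0,\pm}$ with analytic remainder, the eigenvalue being simple and isolated. Concretely, pairing $(A_\pm-k^2)\widetilde v_{0j}^\pm=f_j$ with $v_0^\pm$ in $L_2(G_2)$ and using the symmetry of $A_\pm$ on $V^2_\gamma(G_2)$ — legitimate because for $|\gamma-1|<\mu_1+1/2$ elements of $V^2_\gamma(G_2)$ carry no $r_j^{-\mu_1-1}$ singularity, so the boundary terms at $O_1,O_2$ in the Green formula vanish — together with $A_\pm v_0^\pm=k^2_{0,\pm}v_0^\pm$, gives $(k^2_{0,\pm}-k^2)(\widetilde v_{0j}^\pm,v_0^\pm)_{G_2}=(f_j,v_0^\pm)_{G_2}$. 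Hence $\widetilde v_{0j}^\pm=(k^2_{0,\pm}-k^2)^{-1}(f_j,v_0^\pm)_{G_2}\,v_0^\pm+w(k^2)$ with $w\perp v_0^\pm$ analytic in $k^2\in V$ (the resonator $G_2$ is bounded, so all these $L_2(G_2)$ pairings converge). Expanding the analytic numerator $(f_j,v_0^\pm)_{G_2}$ about $k^2_{0,\pm}$ and absorbing its regular part into the remainder produces the stated splitting with $\widehat v_{0j}^\pm$ analytic; it remains to show that $(f_j,v_0^\pm)_{G_2}$, evaluated at $k^2=k^2_{0,\pm}$, equals $\overline{b_j^\pm}$.

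The evaluation of this residue is the \emph{main obstacle}, and I would carry it out exactly in the spirit of Lemma~\ref{lem Aa}. Since $[\triangle,\Theta(r_j)]$ is supported in $\{\delta/2<r_j<\delta\}$, where both $v$ of (\ref{vK}) and the eigenfunction $v_0^\pm$ satisfy $\triangle u+k^2_{0,\pm}u=0$ (the field $H$ and $\mathbf A'$ vanish near the narrows), I rewrite $(f_j,v_0^\pm)_{G_2}$ by the Green formula on $G_2\cap\{r_j>\epsilon\}$. The interior terms cancel by the equations, the contributions on $\partial G_2$ vanish by the Dirichlet condition, and only the integral over the small sphere $\{r_j=\epsilon\}$ survives. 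Inserting the leading asymptotics $v\sim r_j^{-\mu_1-1}\Phi_1$ (from the normalization of $\widetilde N_{\mu_1+1/2}$) and $v_0^\pm\sim b_j^\pm r_j^{\mu_1}\Phi_1$ (from (\ref{asympt of eigenfunction}), with the angular argument dictated there), the boundary integrand is, to leading order, $(2\mu_1+1)\,r_j^{-2}\,\overline{b_j^\pm}\,|\Phi_1|^2$, the factor $2\mu_1+1$ being the Wronskian of the radial solutions $r_j^{-\mu_1-1}$ and $r_j^{\mu_1}$. Since the area element on $\{r_j=\epsilon\}$ is $r_j^2\,d\varphi_j$, letting $\epsilon\to0$ and invoking the normalization $(2\mu_1+1)\int_{S(K)}|\Phi_1|^2\,d\varphi=1$ adopted before Proposition~\ref{prop2.1} collapses the pairing to $\overline{b_j^\pm}$, the orientation of the normal and the sign of the commutator term fixing the sign precisely as in the proof of Lemma~\ref{lem Aa}. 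This is the required identification of the residue.

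I expect the delicate points to be two routine-but-careful matters. The first is the interchange of limits: one must verify that the lower-order remainders $O(r_j^{\mu_2+1/2})$ in the expansions of $v$ and $v_0^\pm$, as well as the $k^2$-dependent corrections to $v$, contribute nothing to the spherical integral as $\epsilon\to0$; this follows from $\mu_2>\mu_1$ and the same estimates already used for Lemma~\ref{lem Aa}. The second is the bookkeeping of signs and of the angular arguments of $\Phi_1$ at $O_1$ versus $O_2$, where the reflection symmetry relating the two narrows must be accounted for; this is controlled by the conventions fixed in (\ref{vK}), (\ref{asympt of eigenfunction}), and (\ref{vOj}). Both are entirely parallel to the machinery already set up in the preceding lemma.
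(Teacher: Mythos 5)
Your proposal is correct and follows essentially the same route as the paper: both arguments combine the simple-eigenvalue (analytic Fredholm) decomposition of $\widetilde v_{0j}^\pm$ near $k^2_{0,\pm}$ with a Green-formula/Wronskian computation on a small sphere around $O_j$, using the normalization of $\Phi_1$, to identify the residue as $\overline{b_j^\pm}$ (up to the sign conventions, on which the paper itself is not fully consistent between (\ref{limit problems kind 1}) and (\ref{limit problems kind 1 in G2})). The only cosmetic difference is that you evaluate the pairing $(f_j,v_0^\pm)_{G_2}$ of the commutator right-hand side with the eigenfunction, while the paper evaluates $(v_{0j}^\pm,v_0^\pm)_{G_2}$ directly; these differ by a term analytic in $k^2$, so the two computations amount to the same Wronskian calculation.
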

\begin{proof}
We first verify that  $(v_{0j}^\pm,v_0^\pm)_{G_2}=-\overline{b_j^\pm}/(k^2-k^2_{0,\pm})$, where
$v_{0j}^\pm$ are defined by (\ref{vOj}). We have
\[ (\triangle
v_{0j}^\pm+k^2v_{0j}^\pm,v_0^\pm)_{G_{\delta}}-(v_{0j}^\pm,\triangle
v_0^\pm+k^2v_0^\pm)_{G_{\delta}} =-(k^2-k^2_{0,\pm})(v_{0j}^\pm,v_0^\pm)_{G_{\delta}};
\]
the domain $G_{\delta}$ is obtained from  $G_2$ by cutting out the balls of radius 
$\delta$ with centers at  $O_1$ and $O_2$. Applying the Green formula in the same way as in the proof of Lemma ~\ref{lem Aa}, we arrive at 
$-(k^2-k^2_{0,\pm})(v_{0j}^\pm,v_0^\pm)_{G_{\delta}}=\overline{b_j^\pm}+o(1)$. It remains to let
$\delta \to 0$.

Since  $k^2_{0,\pm}$ is a simple eigenvalue, we have
\begin{equation} \label{vOj B}
\widetilde{v}_{0j}^\pm=\frac{B_j^\pm(k^2)}{k^2-k^2_{0,\pm}}v_0^\pm+\widehat{v}_{0j}^\pm,
\end{equation}
where $B_j^\pm(k^2)$ is independent of  $(x,y,z)$ and $\widehat{v}_{0j}^\pm$ are certain functions analytic in $k^2$ near $k^2=k^2_{0,\pm}$. Multiplying  
(\ref{vOj}) by $v_0^\pm$ and  taking into account  (\ref{vOj B}),
the obtained function for $(v_{0j}^\pm,v_0^\pm)_{G_2}$, and the normalized condition 
$(v_0^\pm,v_0^\pm)_{G_2}=1$, we  arrive at
$B_j^\pm(k^2)=-\overline{b_j^\pm}+(k^2-k^2_{0,\pm})\widetilde{B}_j^\pm(k^2)$, 
$\widetilde{B}_j^\pm$ are being certain analytic functions. Together with 
(\ref{vOj B}), this completes the proof.
\qquad\end{proof}
In view of Lemma ~\ref{lemma3.2}, the expressions   $\mbox{{\bf
v}}_{21}^\pm=(k^2-k^2_{0,\pm})v_{01}^\pm$ and ${\bf v}_{22}^\pm=\overline{b_2^\pm} v_{01}-\overline{b_1^\pm} v_{02}^\pm$
can be extended by continuity to $k^2_{0,\pm}$. According to Proposition~\ref{prop2.1},
\begin{align} 
\label{asympt of v21} \textbf{v}_{21}^\pm(x,y) & \sim
\begin{cases}
    r_1^{-1/2}\bigl((k^2-k^2_{0,\pm})\widetilde{N}_{\mu_1+1/2}(kr_1)+c_1^\pm(k)\widetilde{J}_{\mu_1+1/2}(kr_1)\bigr)\Phi_1(\varphi_1),
    & r_1\rightarrow0,  \\
    c_2^\pm(k)r_2^{-1/2}\widetilde{J}_{\mu_1+1/2}(kr_2)\Phi_1(-\varphi_2), &  r_2\rightarrow0, \\
\end{cases} \\
\label{asympt of v22} \textbf{v}_{22}^\pm(x,y) & \sim
\begin{cases}
    r_1^{-1/2}\bigl(\overline{b_2^\pm}\widetilde{N}_{\mu_1+1/2}(kr_1)+d_1^\pm(k)\widetilde{J}_{\mu_1+1/2}(kr_1)\bigr)\Phi_1(\varphi_1), &  r_1\rightarrow0, \\
    r_2^{-1/2}\bigl(-\overline{b_1^\pm}\widetilde{N}_{\mu_1+1/2}(kr_2)+d_2^\pm(k)\widetilde{J}_{\mu_1+1/2}(kr_2)\bigr)
    \Phi_1(-\varphi_2), &  r_2\rightarrow0. \\
\end{cases}
\end{align}
From the proof of Lemma~\ref{lemma3.2} it follows that
$c_j^\pm(k_{0,\pm})=-\overline{b_1^\pm} b_j^\pm$.

\section{Asymptotic formulas}
In Section \ref{subs. wave function}, we present an asymptotic formula for a wave function (see (\ref{asympt of wave function})), explain its structure, and describe the solutions of the first kind limit problems involved in the formula.  We complete deriving the formula (\ref{asympt of wave function}) in \ref{subs. constants s and C}, where we describe the involved solutions of the second kind limit problems and calculate some coefficients in the expressions for the solutions of the first kind problems.   
In Section \ref{subs. resonant tunnelling}, when analysing the expression for $\widetilde{s}_{12}$ 
obtained in \ref{subs. constants s and C}, we derive formal asymptotics of the resonant tunneling characteristics. Note that the remainders in (\ref{pole}) -- (\ref{quality factor}) have arisen at the intermediate stage of consideration during simplification of the principal part of the asymptotics; they are not the remainders in the final asymptotic formulas. The "final"\, remainders are estimated in the next Section 6, see Theorem \ref{thm6.3}. First, we derive the integral estimate (\ref{estimate R}) of the remainder in (\ref{asympt of wave function}), which proves to be sufficient to obtain more simplified estimates of the remainders in the formulas for the characteristics of resonant tunneling.  The formula  (\ref{asympt of wave function}) and the estimate (\ref{estimate R}) are auxiliary and are analysed only to that extent, which is needed for deriving the asymptotics of tunneling. For ease of notations, we shall in this section drop the symbol  "$\pm$"{}, meaning that we deal with one of the equations (\ref{scalar Pauli equation}).
\subsection{The asymptotics of a wave function}\label{subs. wave function}
In the waveguide $G(\varepsilon)$, we consider the scattering of the wave 
$U(x,y,z)=e^{i\nu_1x}\Psi_1(y,z)$ incoming from $-\infty$ (see
(\ref{eigenfunctions})). The corresponding wave function admits the representation
\begin{eqnarray}\label{asympt of wave function}\nonumber
  &&u(x,y,z;\varepsilon)=\chi_{1,\, \varepsilon}(x,y,z)v_1(x,y,z;\varepsilon)+\\
  &&+\Theta(r_1)w_1(\varepsilon^{-1}x_1,\varepsilon^{-1}y_1,\varepsilon^{-1}z_1;\varepsilon)+
     \chi_{2,\, \varepsilon}(x,y,z)v_2(x,y,z;\varepsilon)+\\\nonumber
  &&+\Theta(r_2)w_2(\varepsilon^{-1}x_2,\varepsilon^{-1}y_2,\varepsilon^{-1}z_2;\varepsilon)+
   \chi_{3,\, \varepsilon}(x,y,z)v_3(x,y,z;\varepsilon)+R(x,y,z;\varepsilon).
\end{eqnarray}
Let us explain the notation and structure of this formula. When constructing the asymptotics, we first describe the behavior of the wave function  $u$ outside the narrows
approximating $u$ by the solutions $v_j$ of the homogeneous problems
(\ref{limit problems kind 1}) and  (\ref{limit problems kind 1 in G2}) in $G_j$. As  $v_j$
we take certain linear combinations of the special solutions introduced in the preceding section; in doing so we subject  $v_1$ and $v_3$ to the same radiation conditions at infinity as $u$:
\begin{eqnarray}\nonumber
    &&v_1 (x,y,z;\varepsilon)=\frac{1}{\overline{A}}\overline{\mathbf{v}}_1(x,y,z)+
    \frac{\widetilde{s}_{11}(\varepsilon)}{A}\mathbf{v}_1(x,y,z)\\\label{v1 radiation condition}
    &&
    \qquad\qquad\qquad\sim
    U_1^+(x,y,z)+\widetilde{s}_{11}(\varepsilon)U_1^-(x,y,z),
    \quad
    \;x\rightarrow-\infty;\qquad\\\label{v2}
    &&v_2(x,y,z;\varepsilon)=C_1(\varepsilon)\mathbf{v}_{21}(x,y,z)+C_2(\varepsilon)\mathbf{v}_{22}(x,y,z);\\
\label{v3 radiation condition}
    &&v_3 (x,y,z;\varepsilon)=\frac{\widetilde{s}_{12}(\varepsilon)}{Ae^{-i\nu_1d}}\mathbf{v}_3(x,y,z)\sim
    \widetilde{s}_{12}(\varepsilon)U_2^-(x,y,z),
    \quad x\rightarrow+\infty;
\end{eqnarray}
for the time being the approximations  $\widetilde{s}_{11}(\varepsilon)$,
$\widetilde{s}_{12}(\varepsilon)$ for the entries   $s_{11}(\varepsilon)$,
$s_{12}(\varepsilon)$ of the scattering matrix and the coefficients  $C_1(\varepsilon)$,
$C_2(\varepsilon)$ are unknown. Here $\chi_{j,\varepsilon}$
stand for  the cut-off functions defined by the equalities
\begin{eqnarray*}
  &&\chi_{1,\, \varepsilon}(x,y,z)=
\left(1-\Theta(r_1/\varepsilon)\right)\mathbf{1}_{G_1}(x,y,z),\qquad
\chi_{3,\, \varepsilon}(x,y,z)=
\left(1-\Theta(r_2/\varepsilon)\right)\mathbf{1}_{G_3}(x,y,z), \\
  &&\chi_{2,\, \varepsilon}(x,y,z)=
\left(1-\Theta(r_1/\varepsilon)-\Theta(r_2/\varepsilon)\right)\mathbf{1}_{G_2}(x,y,z),
\end{eqnarray*}
where  $r_j=\sqrt{x_j^2+y_j^2+z_j^2}$ and $(x_j,y_j,z_j)$ are the coordinates of a point
$(x,y,z)$ in the system with the origin shifted to $O_j$;
$\mathbf{1}_{G_j}$ is the indicator of the set  $G_j$
(equal to 1 in $G_j$ and 0 outside $G_j$); $\Theta(\rho)$ is the same cut-off function as in
(\ref{V1}) (equal to  1 for $0\leqslant\rho\leqslant\delta/2$ and
0 for $\rho\geqslant\delta$ with a fixed sufficiently small positive  $\delta$). Thus  $\chi_{j,\, \varepsilon}$ are defined on the whole waveguide  $G(\varepsilon)$ as well as the functions
$\chi_{j,\, \varepsilon}v_3$ in (\ref{asympt of wave function}).

When substituting  $\sum_{j=1}^3\chi_{j, \, \varepsilon}v_j$ in 
(\ref{scalar Pauli equation}), we obtain the discrepancy in the right-hand side of the Helmholtz equation supported near the narrows. We compensate the principal part of the discrepancy making use of the second kind limit problems. In more detail, we rewrite the discrepancy supported near $O_j$ in the coordinates 
$(\xi_j,\eta_j,\zeta_j)=(\varepsilon^{-1}x_j,\varepsilon^{-1}y_j,
\varepsilon^{-1}z_j)$ in the domain 
$\Omega_j$ and take it as right-hand side for the Laplace equation. Then we rewrite the solution 
$w_j$ of the corresponding problem (\ref{limit problems kind 2})
in the coordinates $(x_2,y_2,z_2)$ and multiply it by the cut-off function. As a result, there arises the term
$\Theta(r_j)w_j(\varepsilon^{-1}x_j,\varepsilon^{-1}y_j,\varepsilon^{-1}z_j;\varepsilon)$
in (\ref{asympt of wave function}).

The existence of solutions $w_j$ vanishing as $O(\rho_j^{-\mu_1-1})$ at infinity follows from Proposition
\ref{prop2.3} (see (\ref{asympt w})). However choosing such solutions and then substituting (\ref{asympt of wave function}) in
(\ref{scalar Pauli equation}), we obtain the discrepancy of high order that has to be compensated again.   Therefore we require 
$w_j=O(\rho_j^{-\mu_2-1})$ as $\rho_j\to\infty$. According to
\ref{prop2.3}, such a solution exists if the right-hand side of the problem
(\ref{limit problems kind 2}) satisfies the additional conditions
\[
(F,w_j^l)_{\Omega_j}=0,\qquad (F,w_j^r)_{\Omega_j}=0.
\]
Such conditions (two at each narrow) uniquely define the coefficients $\widetilde{s}_{11}(\varepsilon)$,
$\widetilde{s}_{12}(\varepsilon)$, $C_1(\varepsilon)$, and $C_2(\varepsilon)$.
The remainder  $R(x,y,z;\varepsilon)$ is small in comparison with the principal part of 
(\ref{asympt of wave function}) as $\varepsilon\rightarrow0$.

\subsection{Formulas for $\widetilde{s}_{11}$, $\widetilde{s}_{12}$, $C_1$, and $C_2$}\label{subs. constants s and C}
We are now going to define the right-hand side $F_j$ of problem (\ref{limit problems kind 2})
and to find $\widetilde{s}_{11}(\varepsilon)$, $\widetilde{s}_{12}(\varepsilon)$,
$C_1(\varepsilon)$, and $C_2(\varepsilon)$. We substitute 
$\chi_{1,\,\varepsilon}v_1$ in (\ref{scalar Pauli equation}) and obtain the discrepancy
$$
(\Delta+k^2)\chi_{1,\,\varepsilon}v_1=[\Delta,\chi_{\varepsilon,1}]v_1+\chi_{\varepsilon,1}(\Delta+k^2)v_1
  = [\Delta,1-\Theta(\varepsilon^{-1}r_1)]v_1,
$$
distinct from zero only near the point $O_1$, where  $v_1$ can be replaced by the asymptotics; the boundary condition (\ref{boundary condition}) is fulfilled.
According to ~(\ref{v1 radiation condition}) and (\ref{asympt v1 at inf}),
$$
v_1(x,y,z;\varepsilon)=r_1^{-1/2}\bigl(a_1^-(\varepsilon)\widetilde{N}_{\mu_1+1/2}(kr_1)
+a_1^+(\varepsilon)\widetilde{J}_{\mu_1+1/2}(kr_1)
    \bigr)\Phi_1(-\varphi_1)+O(r_1^{\mu_2}), \quad r_1\rightarrow0,
$$
with
\begin{equation}\label{a1+-}
a_1^-(\varepsilon)=\frac{1}{\overline{A}}+
    \frac{\widetilde{s}_{11}(\varepsilon)}{A},\quad a_1^+=\frac{\overline{a}}{\overline{A}}+
    \frac{\widetilde{s}_{11}(\varepsilon)a}{A}.
\end{equation}
We single out the principal part of each term and put
$\rho_1=r_1/\varepsilon$, then
\begin{align}\nonumber
  (\Delta+k^2)\chi_{\varepsilon,1}v_1 & \sim
  [\Delta,1-\Theta(\varepsilon^{-1}r_1)]\left(a_1^-r_1^{-\mu_1-1}+a_1^+r_1^{\mu_1}\right)\Phi_1(-\varphi_1)
  \\\label{discrepancy in G1}
   & = \varepsilon^{-2}[\Delta_{(\rho_1,\varphi_1)},1-\Theta(\rho_1)]\left(a_1^-\varepsilon^{-\mu_1-1}\rho_1^{-\mu_1-1}+
   a_1^+\varepsilon^{\mu_1}\rho_1^{\mu_1}\right)\Phi_1(-\varphi_1).
\end{align}
In the same way using (\ref{v2}) and (\ref{asympt of
v21})--(\ref{asympt of v22}), we obtain the principal part of the discrepancy given by
$\chi_{\varepsilon,2}v_2$ supported near  $O_1$:
\begin{equation}\label{discrepancy in G2 near O1}
(\Delta+k^2)\chi_{\varepsilon,1}v_1  \sim
\varepsilon^{-2}[\Delta_{(\rho_1,\varphi_1)},1-\Theta(\rho_1)]\left(b_1^-\varepsilon^{-\mu_1-1}\rho_1^{-\mu_1-1}+
   b_1^+\varepsilon^{\mu_1}\rho_1^{\mu_1}\right)\Phi_1(\varphi_1),
\end{equation}
where
\begin{equation}\label{b1+-}
b_1^-=C_1(\varepsilon)(k^2-k_0^2)+C_2(\varepsilon)\overline{b}_2,\qquad
b_1^+=C_1(\varepsilon)c_1+C_2(\varepsilon)d_1.
\end{equation}
As right-hand side  $F_1$ of the problem  (\ref{limit problems kind 2}) in
$\Omega_1$ we take the function
\begin{align}\nonumber
F_1(\xi_1,\eta_1,\zeta_1)=&-[\Delta,\theta^-]\left(a_1^-\varepsilon^{-\mu_1-1}\rho_1^{-\mu_1-1}+
   a_1^+\varepsilon^{\mu_1}\rho_1^{\mu_1}\right)\Phi_1(-\varphi_1)\\\label{F1}
   &-[\Delta,\theta^+]\left(b_1^-\varepsilon^{-\mu_1-1}\rho_1^{-\mu_1-1}+
   b_1^+\varepsilon^{\mu_1}\rho_1^{\mu_1}\right)\Phi_1(\varphi_1),
\end{align}
where $\theta^+$ (respectively $\theta^-$) stands for the function $1-\Theta$
first restricted to the domain $\xi_1>0$ (respectively $\xi_1<0$) and then extended by 
zero to the whole domain  $\Omega_1$. Let  $w_1$ be the corresponding solution then the term 
$\Theta(r_1)w_1(\varepsilon^{-1}x_1,\varepsilon^{-1}y_1,\varepsilon^{-1}z_1;\varepsilon)$
in (\ref{asympt of wave function}) being substituted in (\ref{scalar Pauli equation}) compensate the discrepancies  (\ref{discrepancy in G1}) --
(\ref{discrepancy in G2 near O1}).

In a similar manner, making use of  (\ref{v2}) -- (\ref{v3 radiation condition}),
(\ref{asympt of v21}) -- (\ref{asympt of v22}), and  (\ref{asympt v3 at
inf}), we find the right-hand side of the problem  (\ref{limit problems kind 2}) for
$j=2$:
\begin{align}\nonumber
F_2(\xi_2,\eta_2,\zeta_2)=&-[\Delta,\theta^-]\left(a_2^-\varepsilon^{-\mu_1-1}\rho_2^{-\mu_1-1}+
   a_2^+\varepsilon^{\mu_1}\rho_2^{\mu_1}\right)\Phi_1(-\varphi_2)\\\nonumber
   &-[\Delta,\theta^+]\left(b_2^-\varepsilon^{-\mu_1-1}\rho_2^{-\mu_1-1}+
   b_2^+\varepsilon^{\mu_1}\rho_2^{\mu_1}\right)\Phi_1(\varphi_2);\\\label{ab2+-}
a_2^-(\varepsilon)=-C_2(\varepsilon)\overline{b}_1,\;\;
a_2^+(\varepsilon)&=C_1(\varepsilon)c_2+C_2(\varepsilon)d_2,\;\;
b_2^-(\varepsilon)=\frac{\widetilde{s}_{12}(\varepsilon)}{Ae^{-i\nu_1 d}},\;\;
b_2^+(\varepsilon)=\frac{a\widetilde{s}_{12}(\varepsilon)}{Ae^{-i\nu_1 d}}.
\end{align}
\begin{lem}\label{lem ab}
If the solution  $w_j$ of the problem  (\ref{limit problems kind 2}) with right-hand side 
\begin{align*}
F_j(\xi_j,\eta_j,\zeta_j)=&-[\Delta,\theta^-]\left(a_j^-\varepsilon^{-\mu_1-1}\rho_j^{-\mu_1-1}+
   a_j^+\varepsilon^{\mu_1}\rho_j^{\mu_1}\right)\Phi_1(-\varphi_j)\\
   &-[\Delta,\theta^+]\left(b_j^-\varepsilon^{-\mu_1-1}\rho_j^{-\mu_1-1}+
   b_j^+\varepsilon^{\mu_1}\rho_j^{\mu_1}\right)\Phi_1(\varphi_j),
\end{align*}
$j=1,2$, admits the estimate 
$O(\rho_j^{-\mu_2-1})$ as $\rho_j\to\infty$, then
\begin{equation}\label{ab conditions}
a_j^-\varepsilon^{-\mu_1-1}-\alpha
a_j^+\varepsilon^{\mu_1}-\beta
b_j^+\varepsilon^{\mu_1}=0,\qquad
b_j^-\varepsilon^{-\mu_1-1}-\alpha
b_j^+\varepsilon^{\mu_1}-\beta a_j^+\varepsilon^{\mu_1}=0,
\end{equation}
where $\alpha$ and $\beta$ are the coefficients in (\ref{asymp
wjK}) -- (\ref{asymp wjL}).
\end{lem}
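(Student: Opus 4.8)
The plan is to turn $w_j$ into a genuine solution of the \emph{homogeneous} problem (\ref{limit problems kind 2}) by adding back the harmonic functions whose commutators with the cut-offs produced $F_j$, and then to read off its asymptotic coefficients at infinity and invoke uniqueness. First I would introduce the two functions
\[
h_j^-=\bigl(a_j^-\varepsilon^{-\mu_1-1}\rho_j^{-\mu_1-1}+a_j^+\varepsilon^{\mu_1}\rho_j^{\mu_1}\bigr)\Phi_1(-\varphi_j),\qquad
h_j^+=\bigl(b_j^-\varepsilon^{-\mu_1-1}\rho_j^{-\mu_1-1}+b_j^+\varepsilon^{\mu_1}\rho_j^{\mu_1}\bigr)\Phi_1(\varphi_j),
\]
which are harmonic in $K_j$ because $\rho_j^{\mu_1}\Phi_1(\pm\varphi_j)$ and $\rho_j^{-\mu_1-1}\Phi_1(\pm\varphi_j)$ are exactly the homogeneous harmonic solutions attached to the eigenvalue $\mu_1(\mu_1+1)$ of the Laplace--Beltrami operator on $S(K_j)$. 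Setting $\widehat W_j=\theta^-h_j^-+\theta^+h_j^+$ and using $\Delta h_j^\pm=0$, a direct computation gives $\Delta\widehat W_j=[\Delta,\theta^-]h_j^-+[\Delta,\theta^+]h_j^+=-F_j$. Moreover $\widehat W_j$ vanishes on $\partial\Omega_j$: on the conical part of the boundary the factors $\Phi_1(\pm\varphi_j)$ vanish, while near the vertex $\theta^\pm\equiv0$ since $1-\Theta$ vanishes there. Hence $V_j:=w_j+\widehat W_j$ solves $\Delta V_j=0$ in $\Omega_j$, $V_j=0$ on $\partial\Omega_j$.

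Next I would compute the behavior of $V_j$ as $\rho_j\to\infty$. There $\theta^\pm\equiv1$, so $\widehat W_j$ equals $h_j^\pm$, and the hypothesis $w_j=O(\rho_j^{-\mu_2-1})$ yields
\[
V_j=\begin{cases}
\bigl(a_j^+\varepsilon^{\mu_1}\rho_j^{\mu_1}+a_j^-\varepsilon^{-\mu_1-1}\rho_j^{-\mu_1-1}\bigr)\Phi_1(-\varphi_j)+O(\rho_j^{-\mu_2-1}), & \xi_j<0,\\
\bigl(b_j^+\varepsilon^{\mu_1}\rho_j^{\mu_1}+b_j^-\varepsilon^{-\mu_1-1}\rho_j^{-\mu_1-1}\bigr)\Phi_1(\varphi_j)+O(\rho_j^{-\mu_2-1}), & \xi_j>0.
\end{cases}
\]
The growing parts of $V_j$ coincide with those of $a_j^+\varepsilon^{\mu_1}w_j^l+b_j^+\varepsilon^{\mu_1}w_j^r$, where $w_j^l,w_j^r$ are the special homogeneous solutions of Proposition~\ref{prop2.3} normalized by (\ref{asymp wjK})--(\ref{asymp wjL}). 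Therefore the difference
\[
D_j:=V_j-a_j^+\varepsilon^{\mu_1}w_j^l-b_j^+\varepsilon^{\mu_1}w_j^r
\]
again solves the homogeneous problem and, after cancellation of the $\rho_j^{\mu_1}$ terms, behaves on $\xi_j<0$ like $\bigl(a_j^-\varepsilon^{-\mu_1-1}-\alpha a_j^+\varepsilon^{\mu_1}-\beta b_j^+\varepsilon^{\mu_1}\bigr)\rho_j^{-\mu_1-1}\Phi_1(-\varphi_j)+O(\rho_j^{-\mu_2-1})$ and symmetrically on $\xi_j>0$ with bracket $b_j^-\varepsilon^{-\mu_1-1}-\alpha b_j^+\varepsilon^{\mu_1}-\beta a_j^+\varepsilon^{\mu_1}$. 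These brackets are precisely the left-hand sides of (\ref{ab conditions}).

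Finally I would invoke the uniqueness part of Proposition~\ref{prop2.3}. The leading term $\rho_j^{-\mu_1-1}$ belongs to $V^2_\gamma(\Omega_j)$ for $\gamma$ in the admissible range $|\gamma-1|<\mu_1+1/2$, since the weight $\rho_j^{2(\gamma-2)}$ of the \emph{solution} space makes $\int\rho_j^{2(\gamma-2)}\rho_j^{-2\mu_1-2}\rho_j^2\,d\rho_j$ converge exactly when $\gamma<\mu_1+3/2$, while $\rho_j^{\mu_1}$ is excluded; the remainder $O(\rho_j^{-\mu_2-1})$ decays even faster. Thus $D_j\in V^2_\gamma(\Omega_j)$, and because the homogeneous problem has only the trivial solution in that space, $D_j\equiv0$. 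Equating to zero the coefficients of $\rho_j^{-\mu_1-1}\Phi_1(-\varphi_j)$ and of $\rho_j^{-\mu_1-1}\Phi_1(\varphi_j)$ then gives the two relations in (\ref{ab conditions}). I expect the only delicate point to be this last step, namely confirming that $D_j$ really lies in $V^2_\gamma$ (the $\rho_j^{-\mu_1-1}$ asymptotic term is admissible for the solution-space weight, whereas $\rho_j^{\mu_1}$ is not) so that uniqueness applies and forces the decaying coefficients to vanish; equivalently, these are the solvability conditions $(F_j,w_j^l)_{\Omega}=(F_j,w_j^r)_{\Omega}=0$ of Proposition~\ref{prop2.3}, which one could instead evaluate directly by the Green formula.
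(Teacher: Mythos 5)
Your proof is correct, but it takes a genuinely different route from the paper's. The paper argues via the solvability criterion: by Proposition \ref{prop2.3}, $w_j=O(\rho_j^{-\mu_2-1})$ at infinity if and only if $(F_j,w_j^l)_{\Omega_j}=(F_j,w_j^r)_{\Omega_j}=0$, and it then evaluates the four pairings $([\Delta,\theta^{\pm}]f_{\pm},w_j^{l})_{\Omega_j}$, $([\Delta,\theta^{\pm}]f_{\pm},w_j^{r})_{\Omega_j}$ by the Green formula over the truncated domains $\Omega_j\cap\{\rho_j<R\}$, letting $R\to\infty$ and using the expansions (\ref{asymp wjK})--(\ref{asymp wjL}); written out, these two orthogonality conditions are exactly (\ref{ab conditions}). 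You instead absorb the commutators by adding the cut-off harmonic functions back to $w_j$, which produces a solution $V_j$ of the homogeneous problem, then subtract $a_j^+\varepsilon^{\mu_1}w_j^l+b_j^+\varepsilon^{\mu_1}w_j^r$ to cancel the growing parts, and invoke uniqueness in $V^2_\gamma(\Omega_j)$ to force the remaining decaying solution $D_j$ to vanish identically, so that its leading coefficients --- the left-hand sides of (\ref{ab conditions}) --- must be zero. The two arguments are dual in the usual Fredholm sense (orthogonality to the data side versus triviality of the kernel on the solution side), and both rest on Proposition \ref{prop2.3}, but yours replaces all integral computations by matching of asymptotic coefficients, which is arguably cleaner and makes the origin of the conditions more transparent. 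The price is the point you yourself flag: to apply uniqueness you must know $D_j\in V^2_\gamma(\Omega_j)$, which requires upgrading the pointwise expansions of $w_j^l$, $w_j^r$ (and the pointwise hypothesis on $w_j$) to control of derivatives in the weighted norm. This upgrade is routine --- either because Kondratiev-type asymptotics come with weighted-Sobolev bounds on the remainders, or by scaled elliptic estimates on dyadic annuli applied to $D_j$, which is harmonic outside a compact set and vanishes on $\partial\Omega_j$ --- and the paper itself implicitly uses the same fact when it differentiates the expansion of $w_j^l$ on the sphere $\rho_j=R$ inside its Green-formula computation; so this is a gap of routine detail, not of substance.
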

\begin{proof} By Proposition  \ref{prop2.3},  $w_j=O(\rho_j^{-\mu_2-1})$ as $\rho_j\to\infty$, if and only if the right-hand side of the problem
(\ref{limit problems kind 2}) satisfies the conditions
\begin{equation}\label{F conditions}
(F_j,w_j^l)_{\Omega_j}=0,\qquad (F_j,w_j^r)_{\Omega_j}=0,
\end{equation}
where  $w_j^l$ and  $w_j^r$ are the solutions to the homogeneous problem  (\ref{limit
problems kind 2}) with expansions (\ref{asymp wjK}) --
(\ref{asymp wjL}). We introduce  functions  $f_{\pm}$ on $\Omega_j$
by the equalities
$f_{\pm}(\rho_j,\varphi_j)=\rho_j^{\pm(\mu_1+1/2)-1/2}\Phi_1(\varphi_j)$.
In order to derive (\ref{ab conditions}) from (\ref{F conditions}), 
it suffices to verify that
\begin{align*}
([\Delta,\theta^-]f_-,w_j^l)_{\Omega_j}=([\Delta,\theta^+]f_-,w_j^r)_{\Omega_j}=-1,\quad\,
([\Delta,\theta^-]f_+,w_j^l)_{\Omega_j}=([\Delta,\theta^+]f_+,w_j^r)_{\Omega_j}=\alpha,\\
([\Delta,\theta^+]f_-,w_j^l)_{\Omega_j}=([\Delta,\theta^-]f_-,w_j^r)_{\Omega_j}=0,\qquad
([\Delta,\theta^+]f_+,w_j^l)_{\Omega_j}=([\Delta,\theta^-]f_+,w_j^r)_{\Omega_j}=\beta.
\end{align*}
Let us check the first equalities, the other ones can be considered in a similar way. The support of   $[\Delta,\theta^+]f_-$ is compact, so when calculating 
$([\Delta,\theta^-]f_-,w_j^l)_{\Omega_j}$, one can replace 
$\Omega_j$ by  $\Omega_j^R=\Omega_j\cap\{\rho_j<R\}$ with sufficiently large $R$. Let $E$ denote the set
$\partial\Omega_j^R\cap\{\rho_j=R\}\cap\{\xi_j>0\}$. By the Green formula,
$$
([\Delta,\theta^-]f_-,w_j^l)_{\Omega_j}=
(\Delta\theta^-f_-,w_j^l)_{\Omega_j^R}-(\theta^-f_-,\Delta
w_j^l)_{\Omega_j^R}= (\partial f_-/\partial n,w_j^l)_E-(f_-,\partial
w_j^l/\partial n)_E.
$$
Taking into account  (\ref{asymp wjK}) for $\xi_j<0$ and the definition $\Phi_1$ in 
Proposition ~\ref{prop2.1}, we obtain
\begin{align*}
&([\Delta,\theta^-]f_-,w_j^l)_{\Omega_j}=
\left[\frac{\partial\rho_j^{-\mu_1-1}}{\partial\rho_j}(\rho_j^{\mu_1}+\alpha\rho_j^{-\mu_1-1})-
\rho_j^{-\mu_1-1}\frac{\partial}{\partial\rho_j}
(\rho_j^{\mu_1}+\alpha\rho_j^{-\mu_1-1})\right]
\rho_j^2\Bigg|_{\rho_j=R}\\&\times\int_{S(K)}\Phi(-\varphi_j)^2d\varphi_j+o(1)=
-(2\mu_1+1)\int_{S(K)}\Phi(-\varphi_j)^2d\varphi_j+o(1)=-1+o(1).
\end{align*}
It remains to let  $R\to\infty$. \end{proof}
\begin{rem}
The solutions $w_j$ mentioned in Lemma \ref{lem ab} can be written as linear combinations of certain model
functions independent of $\varepsilon$. We present the corresponding expressions, which will be needed  in the next section for estimating the remainders of asymptotic formulas. Let  $w^l_j$ and $w^r_j$ be the solutions to problem (\ref{limit problems kind 2}) defined by  (\ref{asymp wjK}) -- (\ref{asymp wjL}) and $\theta^+$, $\theta^-$ the same cut-off functions as in (\ref{F1}). We set
\begin{eqnarray*}
\mathbf{w}_j^l=w_j^l- \theta^-\left(\rho_j^{\mu_1}+\alpha\rho_j^{-\mu_1-1}\right)\Phi_1(-\varphi_j) -\theta^+\beta\rho_j^{-\mu_1-1}\Phi_1(\varphi_j),\\
\mathbf{w}_j^r=w_j^r-\theta^-\beta\rho_j^{-\mu_1-1}\Phi_1(-\varphi_j)- \zeta^+\left(\rho_j^{\mu_1}+\alpha\rho_j^{-\mu_1-1}\right)\Phi_1(\varphi_j). 
\end{eqnarray*}
A straightforward verification shows that
\begin{align}\nonumber
w_j&=a_j^+\varepsilon^{\mu_1}\mathbf{w}_j^l+\dfrac{1}{\beta}\left( a_j^-\varepsilon^{-\mu_1-1}-\alpha a_j^+\varepsilon^{\mu_1}\right)\mathbf{w}_j^r\\
\label{wj} 
&=\dfrac{1}{\beta}\left( b_j^-\varepsilon^{-\mu_1-1}-\alpha b_j^+\varepsilon^{\mu_1}\right)\mathbf{w}_j^l+b_j^+\varepsilon^{\mu_1}\mathbf{w}_j^r.
\end{align}
\end{rem}
{\noindent} We use (\ref{a1+-}) and (\ref{b1+-}) to rewrite  (\ref{ab
conditions}) for $j=1$ in the form
\begin{equation}\label{first pair of eq}
\gamma(\varepsilon)\widetilde{s}_{11}(\varepsilon)+\overline{\gamma(\varepsilon)}=C_1(\varepsilon)c_1+C_2(\varepsilon)d_1,\quad
\delta(\varepsilon)\widetilde{s}_{11}(\varepsilon)+\overline{\delta(\varepsilon)}=C_1(\varepsilon)(k^2-k^2_0)+C_2(\varepsilon)\overline{b}_2,
\end{equation}
where
\begin{equation}\label{gammadelta}
    \gamma(\varepsilon)=\frac{1}{A\beta}\left(\varepsilon^{-2\mu_1-1}-a\alpha\right),\qquad
    \delta(\varepsilon)=\frac{1}{A\beta}\left(\alpha+a(\beta^2-\alpha^2)\varepsilon^{2\mu_1+1}\right).
\end{equation}
Moreover, taking account of (\ref{ab2+-}), we rewrite   (\ref{ab conditions}) with $j=2$ in the form
\begin{equation}\label{second pair of eq}
\gamma(\varepsilon)\widetilde{s}_{12}(\varepsilon)=(C_1(\varepsilon)c_2+C_2(\varepsilon)d_2)e^{-i\nu_1d},\quad
\delta(\varepsilon)\widetilde{s}_{12}(\varepsilon)=-C_2(\varepsilon)\overline{b}_1e^{-i\nu_1d}.
\end{equation}
From (\ref{first pair of eq}) and (\ref{second pair of eq}), by means of Lemma \ref{lem Aa}, we obtain $C_1(\varepsilon)$, $C_2(\varepsilon)$,
$\widetilde{s}_{11}(\varepsilon)$, and $\widetilde{s}_{12}(\varepsilon)$:
\begin{align}\label{C1C2}
C_1(\varepsilon)=&(\overline{b}_1c_2)^{-1}\left(\gamma(\varepsilon)\overline{b}_1+\delta(\varepsilon)d_2\right)
\widetilde{s}_{12}(\varepsilon)e^{i\nu_1d},\qquad
C_2(\varepsilon)=-\overline{b}_1^{-1}\delta(\varepsilon)\widetilde{s}_{12}(\varepsilon)e^{i\nu_1d},\\\nonumber
\widetilde{s}_{11}(\varepsilon)
=&(2i\overline{b}_1c_2)^{-1}\bigl((k^2-k^2_0)\overline{b}_1|\gamma(\varepsilon)|^2+
      ((k^2-k^2_0)d_2-\overline{b}_2c_2)\overline{\gamma(\varepsilon)}\delta(\varepsilon)\\\label{s11}
      &-
      \overline{b}_1c_1\gamma(\varepsilon)\overline{\delta(\varepsilon)}-
      (c_1d_2-c_2d_1)|\delta(\varepsilon)|^2\bigr)\widetilde{s}_{12}(\varepsilon)e^{i\nu_1d},
      \\\nonumber
      \widetilde{s}_{12}(\varepsilon) =
      &2i\overline{b}_1c_2e^{-i\nu_1d}\bigl(-(k^2-k^2_0)\overline{b}_1\gamma(\varepsilon)^2-
      ((k^2-k^2_0)d_2-\overline{b}_1c_1-\overline{b}_2c_2)\gamma(\varepsilon)\delta(\varepsilon)\\\label{s12}
      &+
      (c_1d_2-c_2d_1)\delta(\varepsilon)^2\bigr)^{-1}.
\end{align}
\subsection{Asymptotics for resonant tunneling characteristics}\label{subs. resonant tunnelling}
The solutions of the first limit problems involved in  (\ref{asympt of wave function}) are defined for the complex $k^2$ as well. The expression (\ref{s12}) obtained for $\widetilde{s}_{12}$ has a pole at $k^2_p$ in
the lower half-plane. To find  $k^2_p$, we equate 
$2i\overline{b}_1c_2e^{-i\nu_1d}/\widetilde{s}_{12}$ to zero and solve this equation with respect to  $k^2-k^2_0$:
$$
k^2-k^2_0=\left((\overline{b}_1c_1+\overline{b}_2c_2)\gamma(\varepsilon)\delta(\varepsilon)+
      (c_1d_2-c_2d_1)\delta(\varepsilon)^2\right)\left(\overline{b}_1\gamma(\varepsilon)^2+d_2\gamma(\varepsilon)\delta(\varepsilon)\right)^{-1}.
$$
Since the right-hand side of this equation behaves as
$O(\varepsilon^{2\mu_1+1})$ for $\varepsilon\to0$, its solution can be found by the successive approximation method. Taking into account  (\ref{gammadelta}), $c_j(k_0)=-\overline{b}_1b_j$, and Lemma \ref{lem Aa} and neglecting the low order terms, we obtain
 $k^2_p=k^2_r-ik^2_i$,
\begin{equation}\label{pole}
k^2_r=k^2_0-\alpha
(|b_1|^2+|b_2|^2)\varepsilon^{2\mu_1+1}+O(\varepsilon^{4\mu_1+2}),\quad
k_i^2=\beta^2(|b_1|^2+|b_2|^2)|A(k_0^2)|^2\varepsilon^{4\mu_1+2}+O(\varepsilon^{6\mu_1+3}).
\end{equation}
For small  $k^2-k^2_p$,  (\ref{s12}) takes the form
$$
\widetilde{s}_{12}(k,\varepsilon)=-\varepsilon^{4\mu_1+2}\frac{2i\beta^2A(k)^2c_2(k)e^{-i\nu_1d}}{k^2-k^2_p}
\left(1+O(|k^2-k^2_p|+\varepsilon^{2\mu_1+1})\right).
$$
Let $k^2-k^2_0=O(\varepsilon^{2\mu_1+1})$, then
$|k^2-k^2_p|=O(\varepsilon^{2\mu_1+1})$,
$A(k)=A(k^2_0)+O(\varepsilon^{2\mu_1+1})$,
$c_2(k^2)=-\overline{b}_1b_2+O(\varepsilon^{2\mu_1+1})$,
$\nu_1(k)=\nu_1(k^2_0)+O(\varepsilon^{2\mu_1+1})$, and
\begin{align*}
\widetilde{s}_{12}(k,\varepsilon)=\varepsilon^{4\mu_1+2}\frac{2i\beta^2\overline{b}_1b_2A(k_0)^2e^{-i\nu_1(k_0)d}}{k^2-k^2_p}
\left(1+O(\varepsilon^{2\mu_1+1})\right)\\
=\frac{\dfrac{\overline{b}_1}{|b_1|}\dfrac{b_2}{|b_2|}\left(\dfrac{A(k_0)}{|A(k_0)|}\right)^2e^{-i\nu_1(k_0)d}}{\dfrac{1}{2}\left(\dfrac{|b_1|}{|b_2|}+\dfrac{|b_2|}{|b_1|}\right)-iP\dfrac{k^2-k^2_r}{\varepsilon^{4\mu_1+2}}}
\left(1+O(\varepsilon^{2\mu_1+1})\right),
\end{align*}
where $P=(2|b_1||b_2|\beta^2|A(k_0)|^2)^{-1}$. Thus
\begin{equation}\label{T}
\widetilde{T}(k,\varepsilon)=|\widetilde{s}_{12}|^2=\frac{1}{\dfrac{1}{4}\left(\dfrac{|b_1|}{|b_2|}+\dfrac{|b_2|}{|b_1|}\right)^2+P^2\left(\dfrac{k^2-k^2_r}{\varepsilon^{4\mu_1+2}}\right)^2}(1+O(\varepsilon^{2\mu_1+1})).
\end{equation}
The obtained approximation  $\widetilde{T}$ for the transition coefficient has a peak at $k^2=k^2_r$ whose width at its half-height is equal to
\begin{equation}\label{quality factor}
\widetilde{\Upsilon}(\varepsilon)=\left(\dfrac{|b_1|}{|b_2|}+\dfrac{|b_2|}{|b_1|}\right)P^{-1}\varepsilon^{4\mu_1+2}.
\end{equation}

\section{Justification of the asymptotics}

As in the preceding section, here we drop the symbol "$\pm$"{} in notations and do not mention which of the two equations in (\ref{scalar Pauli equation}) is under consideration. We will return to the detailed notation in the formulation of  Theorem \ref{thm6.3}.

We introduce the function spaces for the problem
\begin{equation}
\label{f0 problem in G(eps1eps2)}
 (-i\nabla+\textbf{A})^2u\pm Hu=k^2u \quad \mbox{in}\;G(\varepsilon),\qquad
    u=0\quad \mbox{on}\;\partial G(\varepsilon).
\end{equation}
Recall that the functions  $\textbf{A}$ and $H$ are compactly supported and differ from zero only in the resonator at a distance from the narrows. Let  $\Theta$ be the same function as in  (\ref{V1}). We assume that the cut-off functions $\eta_j$
$j=1,2,3$, are distinct from zero only in $G_j$ and satisfy 
$\eta_1(x,y,z)+\Theta(r_1)+\eta_2(x,y,z)+\Theta(r_2)+\eta_3(x,y,z)=1$ in
$G(\varepsilon)$. With $\gamma \in \mathbb
R$, $\delta
>0$, and  $l=0, 1, 2$ the space $V^l_{\gamma,\delta}(G(\varepsilon))$ is the completion in the norm
\begin{multline}\label{norm G(eps1eps2)}
\|u;
V^l_{\gamma,\delta}(G(\varepsilon))\| \\
=\Biggl(\int_{G(\varepsilon)}\sum_{|\alpha|=0}^{l}\,\Biggl(\sum_{j=1}^2\Theta^2(r_j)\;(r_j^2+\varepsilon_j^2)^{\gamma-l+
|\alpha|}
+\eta_1^2e^{2\delta |x|}+\eta_2+\eta_3^2e^{2\delta |x|}\Biggr)
|\partial^{\alpha}v|^2\,dx\,dy\,dz\Biggr)^{1/2}
\end{multline}
of the set of smooth functions on  $\overline{G(\varepsilon)}$ with compact supports. Denote by
$V_{\gamma,\delta}^{0,\perp}$
the space of functions $f$ that are analytic in $k^2$, take values in  
$V_{\gamma,\delta}^0(G(\varepsilon))$, and, at $k^2=k_0^2$, satisfy 
$(\chi_{2,\varepsilon^{\sigma}}f,v_0)_{G_2}=0$
with a small $\sigma >0$.
\begin{prop}\label{prop5.3}
Assume that $k_r^2$ is a resonant energy,  $k^2_r\to k^2_0$ as $\varepsilon \to 0$, and \/  $|k^2-\nobreak k_r^2|=O(\varepsilon^{2\mu_1+1})$.
We also suppose that  $\gamma$ satisfies
$\mu_1-3/2<\gamma-1<\mu_1+1/2$, $f\in
V^{0,\perp}_{\gamma,\delta}(G(\varepsilon))$,
and $u$ a solution to problem \/ {\rm(\ref{f0 problem in G(eps1eps2)})} that admits the representation 
\[
u= \widetilde{u}+\eta_1A^-_1U_1^-+\eta_3A^-_2U_2^-;
\]
here $A^-_j=const$ and $\widetilde{u}\in
V^2_{\gamma,\delta}(G(\varepsilon))$ with small $\delta>0$. 
Then
\begin{equation}
\label{estimate probl G(eps1eps2)}
 \|\widetilde{u};V^2_{\gamma,\delta}(G(\varepsilon))\|+|A_1^-|+|A_2^-|\leq
c\|f;V^0_{\gamma,\delta}(G(\varepsilon))\|,
\end{equation}
where $c$ is a constant independent of $f$ and $\varepsilon$.\end{prop}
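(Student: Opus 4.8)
The plan is to derive (\ref{estimate probl G(eps1eps2)}) by localizing $u$ with the partition of unity $\eta_1+\Theta(r_1)+\eta_2+\Theta(r_2)+\eta_3=1$, estimating each localized piece by the corresponding limit problem, and then summing the local estimates while absorbing the commutator terms produced by the cut-offs. Since $\textbf{A}$ and $H$ vanish outside the resonator and, in particular, near both narrows, applying the operator $(-i\nabla+\textbf{A})^2\pm H-k^2$ to $\eta_j u$ and to $\Theta(r_j)u$ yields, in each region, a problem governed by the Helmholtz operator (in $G_1,G_2,G_3$) or, after rescaling, by the Laplacian (in $\Omega_j$), with right-hand side equal to the localized datum plus a commutator $[\Delta,\cdot]u$ supported in the fixed-scale annuli $r_j\sim\delta$ where the cut-offs change.

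First I would treat the outer regions. The functions $\eta_1u$ and $\eta_3u$ solve problem (\ref{limit problems kind 1}) in $G_1$ and $G_3$ with the outgoing radiation that isolates the coefficients $A_1^-$ and $A_2^-$; Proposition~\ref{prop2.2} (whose trivial-kernel hypothesis is in force) then bounds $\|\eta_1u;V^2_{\gamma,\delta}(G_1)\|+|A_1^-|$ and its $G_3$-analogue by the local right-hand sides. For the narrows I would pass to the stretched coordinates $(\xi_j,\eta_j,\zeta_j)=\varepsilon^{-1}(x_j,y_j,z_j)$, in which $\Theta(r_j)u$ solves $\Delta w+\varepsilon^2k^2w=\varepsilon^2(\text{datum})$ on $\Omega_j$; the term $\varepsilon^2k^2w$ is a small perturbation of $\Delta$, so Proposition~\ref{prop2.3} controls the rescaled $V^2_\gamma(\Omega_j)$-norm. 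On the annuli $r_j\sim\delta$ carrying the commutators, the weights in (\ref{norm G(eps1eps2)}), in (\ref{2.2})/(\ref{2.4}), and (after rescaling) in (\ref{2.7}) are mutually equivalent, and the matching of the admissible exponents $\rho_j^{-\mu_1-1}$ with $r_j^{-\mu_1-1}$ supplies a positive power of $\varepsilon$ in front of the commutator norms, so these coupling terms can be absorbed into the left-hand side once $\varepsilon$ is small.

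The decisive step is the resonator estimate for $\eta_2u$, where $k^2$ lies within $O(\varepsilon^{2\mu_1+1})$ of the simple eigenvalue $k_0^2$ of (\ref{limit problems kind 1 in G2}); Proposition~\ref{prop2.1} alone is insufficient because the $G_2$-resolvent blows up like $(k^2-k_0^2)^{-1}$ along $v_0$. I would split $\eta_2u=c\,v_0+u_2^\perp$ with $u_2^\perp\perp v_0$, estimate $u_2^\perp$ by the second part of Proposition~\ref{prop2.1}, and determine $c$ by pairing the $G_2$-equation with $v_0$. Using the Green formula of Section~\ref{s2new} and $((-i\nabla+\textbf{A})^2\pm H-k^2)v_0=-(k^2-k_0^2)v_0$ gives, schematically, $(k^2-k_0^2)(\eta_2u,v_0)_{G_2}=-(\eta_2f,v_0)_{G_2}-([\Delta,\eta_2]u,v_0)_{G_2}+(\text{conical boundary terms at }O_1,O_2)$, the last evaluated as in the proof of Lemma~\ref{lem Aa}. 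Here the hypothesis $f\in V^{0,\perp}_{\gamma,\delta}$ enters: the constraint $(\chi_{2,\varepsilon^\sigma}f,v_0)_{G_2}=0$ holds at $k^2=k_0^2$; the difference $((\eta_2-\chi_{2,\varepsilon^\sigma})f,v_0)_{G_2}$ is supported where $v_0=O(r_j^{\mu_1})$ vanishes, hence is small in the weighted norm; and analyticity of $f$ in $k^2$ together with a Cauchy estimate converts the $k^2$-dependence into a factor $O(|k^2-k_0^2|)=O(\varepsilon^{2\mu_1+1})$. Thus $|(\eta_2f,v_0)_{G_2}|\le c\,\varepsilon^{2\mu_1+1}\|f;V^0_{\gamma,\delta}(G(\varepsilon))\|$, which after division by $|k^2-k_0^2|\sim\varepsilon^{2\mu_1+1}$ contributes an $O(\|f\|)$ bound to $|c|$.

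The main obstacle is precisely this resonator estimate, and within it the delicate point is that the coupling quantities $([\Delta,\eta_2]u,v_0)_{G_2}$ and the conical boundary terms must be shown to carry enough powers of $\varepsilon$ (coming from the near/far matching of the $r_j^{-\mu_1-1}$ amplitudes) that, even after division by $|k^2-k_0^2|\sim\varepsilon^{2\mu_1+1}$, they remain a small multiple of $\|\widetilde u;V^2_{\gamma,\delta}(G(\varepsilon))\|+|A_1^-|+|A_2^-|$ and can be absorbed; without the orthogonality built into $V^{0,\perp}_{\gamma,\delta}$ the coefficient $c$, and hence $\|u\|$, would be magnified by $(k^2-k_0^2)^{-1}\sim\varepsilon^{-2\mu_1-1}$, and the closeness $|k^2-k_r^2|=O(\varepsilon^{2\mu_1+1})$ with $k_r^2\to k_0^2$ is exactly what makes the two powers cancel. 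Collecting the outer, narrow, and resonator estimates, summing, and absorbing the small commutator terms for $\varepsilon$ small yields (\ref{estimate probl G(eps1eps2)}) with $c$ independent of $f$ and $\varepsilon$. One could equally argue by contradiction: assuming the estimate fails, normalize a sequence $(\varepsilon_n,f_n,u_n)$ with $\|f_n\|\to0$ and left-hand side $\equiv1$, extract local limits in $G_1,G_2,G_3$ and in the stretched domains $\Omega_j$, and observe that each limit solves a homogeneous limit problem; Propositions~\ref{prop2.2} and~\ref{prop2.3} kill the limits in $G_1,G_3,\Omega_j$, while the pairing argument above forces the $G_2$-limit $c\,v_0$ to vanish, contradicting the normalization.
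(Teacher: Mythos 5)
Your overall strategy---localizing the solution $u$ rather than the data $f$ and reading off an a priori estimate from the limit problems---is genuinely different from the paper's proof, which instead builds a parametrix: it decomposes $f$, solves the limit problems (taking the resonator solution orthogonal to $v_0$), glues them into an approximate inverse $R_\varepsilon$ with small error $S_\varepsilon$, and then corrects $R_\varepsilon$ by a rank-one term $f\mapsto U(f)+a(f)u_p$ built from a quasimode $u_p$ at the complex pole $k_p^2$, so that the error operator preserves the subspace $V^{0,\perp}_{\gamma,\delta}$. Unfortunately, your treatment of the resonator---which you rightly call the decisive step---has a genuine gap. You determine the coefficient $c=(\eta_2u,v_0)_{G_2}$ by pairing the equation with $v_0$ and dividing by $k^2-k_0^2$, asserting $|k^2-k_0^2|\sim\varepsilon^{2\mu_1+1}$. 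But the hypotheses give only an \emph{upper} bound: $|k^2-k_r^2|=O(\varepsilon^{2\mu_1+1})$ and, by (\ref{pole}), $|k_r^2-k_0^2|=O(\varepsilon^{2\mu_1+1})$, so the admissible window contains $k^2=k_0^2$ itself. At $k^2=k_0^2$ your identity degenerates into a compatibility condition and gives no information on $c$; for $k^2$ merely very close to $k_0^2$ the bound on $c$ blows up. This is not a repairable technicality within your scheme: near resonance there exist quasimodes (the paper's $u_p$, with $\|u_p\|\approx\|v_0\|$ and $\|((-i\nabla+\textbf{A})^2\pm H-k_p^2)u_p\|=O(\varepsilon^{2\mu_1+1})$ by (\ref{estimate vp})), so only an argument that \emph{quantitatively} exploits the orthogonality constraint can yield a uniform bound; this also invalidates your closing compactness/contradiction variant, since the quasimodes furnish exactly the near-null sequences it would have to exclude. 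The paper's way out is that the denominator in its correction coefficient $a(f)$ is not $k^2-k_0^2$ but $k_0^2-k_p^2$, a quantity depending only on $\varepsilon$ and bounded \emph{below} by $c\,\varepsilon^{2\mu_1+1}$ thanks to the real shift $-\alpha(|b_1|^2+|b_2|^2)\varepsilon^{2\mu_1+1}$ and the nonzero imaginary part in (\ref{pole}); that lower bound is precisely what your scheme lacks.

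Two further points you defer are, in fact, where the work lies, even for $k^2$ at the edge of the window. First, the coupling terms $([\Delta,\eta_2]u,v_0)_{G_2}$ plus the conical boundary terms reduce, by the Wronskian computation as in Lemma \ref{lem Aa}, to the singular amplitudes (the coefficients of $r_j^{-\mu_1-1}$) of the unknown $u$ at the narrows, multiplied by $b_j$. These live at the fixed scale $r_j\sim\delta$, where neither $u$ nor $v_0$ is small, so no weight comparison produces a power of $\varepsilon$ ``for free''; showing they are $O(\varepsilon^{2\mu_1+1})$ times the left-hand side of (\ref{estimate probl G(eps1eps2)}) is the analogue, for the unknown solution, of the matching conditions (\ref{ab conditions}) of Lemma \ref{lem ab}, i.e.\ essentially the entire difficulty of the proposition, not an absorbable remainder. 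Second, the orthogonality defining $V^{0,\perp}_{\gamma,\delta}$ involves $\chi_{2,\varepsilon^{\sigma}}$, not $\eta_2$; the discrepancy $((\eta_2-\chi_{2,\varepsilon^{\sigma}})f,v_0)_{G_2}$ is supported on $\{\delta\varepsilon^{\sigma}/2\le r_j\le\delta\}$, and with $\gamma$ near $\mu_1+3/2$ (the value needed later, in Theorem \ref{thm5.4}) it is only $O(\|f;V^0_{\gamma,\delta}(G(\varepsilon))\|)$, not $O(\varepsilon^{2\mu_1+1}\|f\|)$---so even granting $|k^2-k_0^2|\sim\varepsilon^{2\mu_1+1}$, this single term ruins the bound on $c$ after the division. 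In short, the proposal identifies the right obstacle but does not overcome it; the paper's quasimode-corrected parametrix is not an optional refinement but the mechanism that makes the estimate uniform in $\varepsilon$ and $k$.
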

\begin{proof}
{\it Step} A\@. We first construct an auxiliary function $u_p$. As was mentioned,  $\widetilde{s}_{12}$ has the pole $k^2_p=k^2_r-ik^2_i$ (see (\ref{pole})). 
Let us multiply the solutions of limit problems involved in (\ref{asympt of wave function}),
by $A(k)b_2\beta\varepsilon^{2\mu_1+1}/ s_{12}(\varepsilon,k)e^{i\nu_1d}$, set $k=k_p$, and re-denote the obtained functions endowing them with the index $p$.Then 
\begin{align}\label{vp}
   v_{1p}(x,y,z;\varepsilon) &= \varepsilon^{2\mu_1+1}(b_1\beta+O(\varepsilon^{2\mu_1+1}))\textbf{v}_1(x,y,z;k_p),\\\nonumber
   v_{3p}(x,y,z;\varepsilon) &= \varepsilon^{2\mu_1+1}b_2\beta\textbf{v}_1(x,y,z;k_p); \\\nonumber
  v_{2p}(x,y,z;\varepsilon) &= \left(-\frac{1}{\overline{b}_1}+O\big(\varepsilon^{2\mu_1+1}\big)\right)
  \textbf{v}_{21}(x,y,z;k_p) \\\nonumber
  & + \varepsilon^{2\mu_1+1}\left(-\alpha\frac{ b_2}{\overline{b}_1}+O\big(\varepsilon^{2\mu_1+1}\big)\right)
  \textbf{v}_{22}(x,y,z;k_p),\\\nonumber
  w_{1p}(\xi_1,\eta_1,\zeta_1;\varepsilon) &=
  b_1\varepsilon^{2\mu_1+1}\left(\varepsilon^{2\mu_1+1}\left(a(k_p)\beta +O(\varepsilon^{2\mu_1+1})\right)
  \mathbf{w}_1^l(\xi_1,\eta_1,\zeta_1)\right.\\\label{w1p}
  &\left.+\left(1+O(\varepsilon^{2\mu_1+1})\right)\mathbf{w}_1^r(\xi_1,\eta_1,\zeta_1)\right),\\\nonumber
  w_{2p}(\xi_2,\eta_2,\zeta_2;\varepsilon) &=
  b_2\varepsilon^{2\mu_1+1}\left(\left(1+O(\varepsilon^{2\mu_1+1})\right)\mathbf{w}_1^l(\xi_2,\eta_2, \zeta_2)\right.\\\label{w2p}
  &\left.+a(k_p)\beta\varepsilon^{2\mu_1+1}
  \mathbf{w}_1^r(\xi_2,\eta_2,\zeta_2)\right);
\end{align}
the dependence of $k_p$ on $\varepsilon$ has not been indicated. We set
\begin{multline}\label{up}
    u_p(x,y,z;\varepsilon)=\Xi(x,y,z)\left[\chi_{1,\varepsilon}(x,y,z)v_{1p}(x,y,z;\varepsilon)+
    \Theta(\varepsilon^{-2\sigma}r_1)w_{1p}(\varepsilon^{-1}x_1,\varepsilon^{-1}y_1,\varepsilon^{-1}z_1;\varepsilon)\right.\\
    +\chi_{2,\varepsilon}(x,y,z)v_{2p}(x,y,z;\varepsilon)+
    \Theta(\varepsilon^{-2\sigma}r_2)w_{2p}(\varepsilon^{-1}x_2,\varepsilon^{-1}y_2,\varepsilon^{-1}z_2;k,\varepsilon)\\
    +\left.
    \chi_{3,\varepsilon}(x,y,z)v_{2p}(x,y,z;k,\varepsilon)\right],
\end{multline}
where $\Xi$ is a cut-off function on
$G(\varepsilon)$ equal to 1 on 
$G(\varepsilon)\cap\{|x|<R\}$ and  0~on
$G(\varepsilon)\cap\{|x|>R+1\}$ with sufficiently large $R>0$, $(x_j,y_j,z_j)$ are the coordinates of a point $(x,y,z)$ in the system with origin shifted to $O_j$.
The term $\chi_{2,\varepsilon}v_{2p}$ gives the main contribution in the norm of  $u_p$.
In view of the definitions of  $v_{2p}$ and $\mathbf{v}_{21}$ (see Section \ref{s4new}) and Lemma \ref{lemma3.2}, we obtain
$\|\chi_{2,\varepsilon}v_{2p}\|=\|v_0\|+o(1)$. 

{\it Step} B\@. We show that \begin{equation}
\label{estimate vp}
\|((-i\nabla+\textbf{A})^2\pm H-k^2_p)u_p;V^0_{\gamma,\,
\delta}(G(\varepsilon))\|\leq
c\varepsilon^{\mu_1+\kappa},
\end{equation}
where
$\kappa=\min\{\mu_1+1,$~$\mu_2+1-\sigma_1,$~$\gamma+3/2\}$,
$\sigma_1=2\sigma(\mu_2-\gamma+3/2)$. If
$\mu_1-3/2<\gamma-1$ and $\sigma$ is sufficiently small so that
$\mu_2-\mu_1>\sigma_1$, then $\kappa=\mu_1+1$.

By virtue of (\ref{up})
\begin{align*}
   &((-i\nabla+\textbf{A})^2\pm H-k^2_p)u_p(x,y,z;\varepsilon) \\
   & =[\triangle,\chi_{1,\varepsilon}]\left(v_1(x,y,z;\varepsilon)-b_1\beta\varepsilon^{2\mu_1+1}
   (r_1^{-\mu_1-1}+a(k_p)r_1^{\mu_1})\Phi_1(-\varphi_1)\right)\nonumber\\
    &    +[\triangle,\Theta]w_{1p}(\varepsilon^{-1}x_1,\varepsilon^{-1}y_1,\varepsilon^{-1}z_1;\varepsilon)-
   k^2\Theta(\varepsilon^{-2\sigma}r_1)w_{1p}(\varepsilon^{-1}x_1,\varepsilon^{-1}y_1, \varepsilon^{-1}z_1;\varepsilon)\nonumber\\
    &  +[\triangle,\chi_{2,\varepsilon}]\left(v_2(x,y,z;\varepsilon)-\Theta(r_1)\bigl(b_{1p}^-(\varepsilon)
   r_1^{-\mu_1-1}+b_{1p}^+(\varepsilon)r_1^{\mu_1}\bigr)\Phi_1(-\varphi_1)\right.
    \nonumber\\
    & \qquad\qquad\qquad\qquad\quad -\left. \Theta(r_2)\bigl(a_{2p}^-(\varepsilon)
   r_2^{-\mu_1-1}+a_{2p}^+(\varepsilon)r_2^{\mu_1}\bigr)\Phi_1(\varphi_2)\right)
    \nonumber\\
    &   +[\triangle,\Theta]w_{2p}(\varepsilon^{-1}x_2,\varepsilon^{-1}y_2,\varepsilon^{-1}z_2;\varepsilon)-
   k^2\Theta(\varepsilon^{-2\sigma}r_2)w_{2p}(\varepsilon^{-1}x_2,\varepsilon^{-1}y_2, \varepsilon^{-1}z_2;\varepsilon)\nonumber\\
    &  
    +[\triangle,\chi_{3,\varepsilon}]\left(v_3(x,y,z;\varepsilon)-b_2\beta\varepsilon^{2\mu_1+1}
   (r_2^{-\mu_1-1}+a(k_p)r_2^{\mu_1})\Phi_1(\varphi_2)\right)\\
   &  \qquad\qquad \qquad\qquad \qquad\qquad +[\triangle,\Xi]v_1(x,y,z;\varepsilon)+[\triangle,\Xi]v_3(x,y,z;\varepsilon),
\end{align*}
where $b_{1p}^-=O(\varepsilon^{2\mu_1+1})$, $b_{1p}^+=b_1+O(\varepsilon^{2\mu_1+1})$, $a_{2p}^-=O(\varepsilon^{2\mu_1+1})$, $a_{2p}^+=b_2+O(\varepsilon^{2\mu_1+1})$. Taking account of the asymptotics  $\mathbf{v}_1$ as $r_1\rightarrow 0$ and going to  the variables $(\xi_1,\eta_1,\zeta_1)=(\varepsilon^{-1}x_1,\varepsilon^{-1}y_1,\varepsilon^{-1}z_1)$, we arrive at 
\begin{align*}
   \left\|(x,y,z)\mapsto[\triangle,\chi_{1,\varepsilon}]\left(\mathbf{v}_1(x,y,z)-
   (r_1^{-\mu_1-1}+a(k_p)r_1^{\mu_1})\Phi_1(-\varphi_1)\right);
   V^0_{\gamma,\delta}(G(\varepsilon))\right\|^2\\
   \leq c\int_{G(\varepsilon)}(r_1^2+\varepsilon^2)^{\gamma}
   \left|[\triangle,\chi_{1,\varepsilon}] r_1^{-\mu_1+1}\Phi(-\varphi_1)\right|^2dx\,dy\,dz
   \leq c\varepsilon^{2(\gamma-\mu_1+1/2)}.
\end{align*}
This and (\ref{vp}) imply that
\[\left\|(x,y,z)\mapsto[\triangle,\chi_{1,\varepsilon}]\left(v_1(x,y,z)-
   (r_1^{-\mu_1-1}+a(k_p)r_1^{\mu_1})\Phi(-\varphi_1)\right);
   V^0_{\gamma,\delta}(G(\varepsilon))\right\|
   \leq c\varepsilon^{\gamma+\mu_1+3/2}.
\]
Similarly,
\begin{align*}
&\left\|(x,y,z)\mapsto[\triangle,\chi_{2,\varepsilon}]\left(v_2(x,y,z)
   -\Theta(r_1)\bigl(b_{1p}^-(\varepsilon)
   r_1^{-\mu_1-1}+b_{1p}^+(\varepsilon)r_1^{\mu_1}\bigr)\Phi_1(-\varphi_1)\right.\right.
    \\ &\qquad\qquad\qquad\qquad\quad -\left.\left. \Theta(r_2)\bigl(a_{2p}^-(\varepsilon)
   r_2^{-\mu_1-1}+a_{2p}^+(\varepsilon)r_2^{\mu_1}\bigr)\Phi_1(\varphi_2)\right)\right\|
   \leq c\varepsilon^{\gamma+\mu_1+3/2},\\
   &\left\|(x,y,z)\mapsto[\triangle,\chi_{3,\varepsilon}]\left(v_3(x,y,z)-
   (r_2^{-\mu_1-1}+a(k_p)r_2^{\mu_1})\Phi_1(\varphi_2)\right);
   V^0_{\gamma,\delta}(G(\varepsilon))\right\|
   \leq c\varepsilon^{\gamma+\mu_1+3/2}.
\end{align*}
It is clear that
\[\left\|[\triangle,\Xi]v_l;
   V^0_{\gamma,\delta}(G(\varepsilon))\right\|
   \leq c\varepsilon^{2\mu_1+1},\quad l=1,3.\]
Further, since
$\textbf{w}^l_j$ behaves as $O(\rho_j^{-\mu_2-1})$ at infinity, we have
\begin{multline*}
    \int_{G(\varepsilon)}(r^2_j+\varepsilon^2)^{\gamma}\left|[\triangle,\Theta]
    \mathbf{w}^l_j(\varepsilon^{-1}x_j,\varepsilon^{-1}y_j,\varepsilon^{-1}z_j)\right|^2dx_jdy_jdz_j
    \\
    \leq c\int_{K_j}(r_j^2+\varepsilon^2)^{\gamma}\left|[\triangle,\Theta](\varepsilon^{-1}r_j)^{-\mu_2-1}\Phi_2(\varphi_j)\right|^2dx_jdy_jdz_j
    \leq
    c\varepsilon^{2(\mu_2+1-\sigma_1)},
\end{multline*}
where $\sigma_1=2\sigma(\mu_2-\gamma+3/2)$. There holds a similar inequality with  $\mathbf{w}^l_j$ changed for $\mathbf{w}^r_j$. In view of (\ref{w1p}) and (\ref{w2p}), we obtain
\[
\left\|[\triangle,\Theta] w_{jp};
   V^0_{\gamma,\delta}(G(\varepsilon))\right\|
   \leq c\varepsilon^{\mu_1+\mu_2+1-\sigma_1}.
\]
Finally, using (\ref{w1p}) and (\ref{w2p}) once more, taking into account the estimate
\begin{eqnarray*}
    &&\int_{G(\varepsilon)}(r_j^2+\varepsilon^2)^{\gamma}\left|\Theta(\varepsilon^{-2\sigma}r_j)
   \mathbf{w}_j^l(\varepsilon^{-1}x_j,\varepsilon^{-1}y_j,\varepsilon^{-1}z_j)\right|^2dx_jdy_jdz_j\\
   &=&\varepsilon^{2\gamma+3}\int_{\Omega}(\rho_j^2+1)^{\gamma}\left|\Theta(\varepsilon^{1-2\sigma}\rho_j)
    \mathbf{w}_j^l(\xi_j,\eta_j,\zeta_j)\right|^2d\xi_jd\eta_jd\zeta_j
    \leq
    c\varepsilon^{2\gamma+3},
\end{eqnarray*}
and a similar estimate for $\mathbf{w}_j^r$, we derive
\[\left\|(x,y)\mapsto\Theta(\varepsilon^{-2\sigma}r_j)
w_{jp}(\varepsilon^{-1}x_j,\varepsilon^{-1}y_j,\varepsilon^{-1}z_j);
   V^0_{\gamma,\delta}(G(\varepsilon))\right\|
   \leq c\varepsilon^{\mu_1+\gamma+3/2}.\]
Combining the obtained inequalities, we arrive at  (\ref{estimate vp}).

{\it Step} C\@. This part contains a somewhat modified argument in the proof of Theorem  5.5.1~\cite{MNP}. Let us rewrite the right-hand side of problem (\ref{f0 problem in G(eps1eps2)}) in the form
\begin{align}
\nonumber
f(x,y,z) &
=f_1(x,y,z;\varepsilon)+f_2(x,y,z;\varepsilon)+f_3(x,y,z;\varepsilon)
\\\label{f in G(eps1eps2)}
    & \quad +\varepsilon^{-\gamma-3/2}F_1(\varepsilon^{-1}x_1,\varepsilon^{-1}y_1,\varepsilon^{-1}z_1;\varepsilon_1)+
    \varepsilon^{-\gamma-3/2}F_2(\varepsilon^{-1}x_2,\varepsilon^{-1}y_2,\varepsilon^{-1}z_2;\varepsilon),
\end{align}
where
\begin{eqnarray*}
    &&f_l(x,y,z;\varepsilon) =
  \chi_{l,\varepsilon^{\sigma}}(x,y,z)f(x,y,z),\\
  &&F_j(\xi_j,\eta_j,\zeta_j;\varepsilon) =
   \varepsilon^{\gamma+3/2}\Theta(\varepsilon^{1-\sigma}\rho_j)
   f(x_{O_j}+\varepsilon\xi_j,y_{O_j}+\varepsilon\eta_j,z_{O_j}+\varepsilon\zeta_j);
\end{eqnarray*}
$(x,y,z)$ are arbitrary Cartesian coordinates; $(x_{O_j},y_{O_j},z_{O_j})$ denote the coordinates of the point $O_j$ in the system  $(x,y,z)$; $x_j,y_j,z_j$ were introduced in Section \ref{s2new}.
From the definitions of the norms it follows that
\begin{equation}
\label{estimates fF in G(eps1eps2)}
\|f_1;V^0_{\gamma,\,
\delta}(G_1)\| +\|f_2;V^0_{\gamma}(G_2)\| +\|f_3;V^0_{\gamma,\,
\delta}(G_3)\| +\|F_j;V_{\gamma}^0(\Omega_j)\|\leq
\|f;V^0_{\gamma,\,\delta}(G(\varepsilon))\|.
\end{equation}
We consider solutions   $v_l$ and $w_j$ of the limit problems \[
\begin{aligned}
-(-i\nabla+\textbf{A})^2v\pm Hv+k^2v & =f_2 \, \text{in } G_2,\quad & v & =0 \  \text{on } \partial G_2,\\
\triangle v+k^2v & =f_l \, \text{in } G_l,\quad & v & =0 \  \text{on } \partial G_l,\quad l=1,3,\\
\triangle w & =F_j \, \text{in } \Omega_j,\quad & w & =0 \ 
\text{on } \partial\Omega_j,
\end{aligned}
\]
respectively; besides,  $v_l$ with $l=1,3$ satisfy the intrinsic radiation conditions at infinity, 
whereas  $v_2$ is subject to the condition $(v_2,v_0)_{G_2}=0$. According to Proposition \ref{prop2.1},
\ref{prop2.2}, and ~\ref{prop2.3}, the problems in $G_l$ and 
$\Omega_j$ are uniquely solvable and 
\begin{equation}
\label{estimates vjw in G(eps1eps2)}
\begin{aligned}
\|v_2;V^2_{\gamma}(G_2) \|& \leq
      c_2\|f_2;V^0_{\gamma}(G_2)\|,\\
\|v_l;V^2_{\gamma, \delta, -}(G_l)\| & \leq c_l\|f_l;V^0_{\gamma,\delta}(G_l)\|,\;l=1,3\\
\|w_j;V_{\gamma}^2(\Omega_j)\| & \leq
C_j\|F_j;V_{\gamma}^0(\Omega_j)\|,\;j=1,2,
\end{aligned}
\end{equation}
where  $c_l$ and  $C_j$ are independent of  $\varepsilon$. We set
\begin{align*}
U(x,y,z;\varepsilon)&=\chi_{1,\varepsilon}(x,y,z)v_1(x,y,z;\varepsilon)+
\varepsilon^{-\gamma+3/2}\Theta(r_1)w_1(\varepsilon^{-1}x_1,\varepsilon^{-1}y_1,\varepsilon^{-1}z_1;\varepsilon)\\
&+\chi_{2,\varepsilon}(x,y,z)v_2(x,y,z;\varepsilon)+
\varepsilon^{-\gamma+3/2}\Theta(r_2)w_2(\varepsilon^{-1}x_2,\varepsilon^{-1}y_2, \varepsilon^{-1}z_2;\varepsilon)\\
&+\chi_{3,\varepsilon}(x,y,z)v_3(x,y,z;\varepsilon). 
\end{align*}
The estimates  (\ref{estimates fF in G(eps1eps2)}) and (\ref{estimates vjw in G(eps1eps2)}) lead to \begin{equation}
\label{estimate U with f in G(eps1eps2)}
    \|U;V^2_{\gamma,\, \delta, -}(G(\varepsilon))\|\leq
    c\|f;V^0_{\gamma,\delta}(G(\varepsilon))\|
\end{equation}
with constant  $c$ independent of $\varepsilon$. Denote the operator  $f\mapsto U$ by $R_{\varepsilon}$. Arguing as in the proof of  \cite[Theorem 5.5.1]{MNP}, we obtain
$(-(-i\nabla+\textbf{A})^2\pm H+k^2)R_{\varepsilon}=I+S_{\varepsilon}$,
where $S_{\varepsilon}$ is an operator with small norm in  $V^0_{\gamma,\delta}(G(\varepsilon))$.

{\it Step} D\@. Recall that the operator $S_{\varepsilon}$ is defined on the subspace 
$V_{\gamma,\,\delta}^{0,\perp}(G(\varepsilon))$. We need that the range of 
$S_{\varepsilon}$ would also be in  $V_{\gamma,\,\delta}^{0,\perp}(G(\varepsilon))$.
To this end we change  $R_{\varepsilon}$ for $\widetilde{R}_{\varepsilon}:f\mapsto U(f)+a(f)u_p$, where $u_p$ was constructed at step ~{\bf A}\@, $a(f)$ being a constant. Then 
$(-(-i\nabla+\textbf{A})^2\pm H+k^2)\widetilde{R}_{\varepsilon}=I+\widetilde{S}_{\varepsilon}$ with
$\widetilde{S}_{\varepsilon}=S_{\varepsilon}+a(\cdot)(-(-i\nabla+\textbf{A})^2\pm H+k^2)u_p$. The condition
$(\chi_{2,\varepsilon^{\sigma}}\widetilde{S}_{\varepsilon}f,v_0)_{G_2}=0$
with $k=k_0$ implies that
$a(f)=-(\chi_{2,\varepsilon^{\sigma}}S_{\varepsilon}f,v_0)_{G_2}/
(\chi_{2,\varepsilon^{\sigma}}(-(-i\nabla+\textbf{A})^2\pm H+k^2_0)u_p,v_0)_{G_2}$.
We show that  $\|\widetilde{S}_{\varepsilon}\|\leq
c\|S_{\varepsilon}\|$, where $c$ is independent of
$\varepsilon$ and $k$. We have \[
\|\widetilde{S}_{\varepsilon}f\|\leq
\|S_{\varepsilon}f\|+
|a(f)|\,\|(-(-i\nabla+\textbf{A})^2\pm H+k^2)u_p\|.
\]
The estimate (\ref{estimate vp}) (with $\gamma>\mu_1-1/2$ and
$\mu_2-\mu_1>\sigma_1$), the formula for $k_p$, and the condition
$k^2-k^2_0=O\bigl(\varepsilon^{2\mu_1+1}\bigr)$
lead to the inequality
\begin{eqnarray*}
&&\|(-(-i\nabla+\textbf{A})^2\pm H+k^2)u_p;V_{\gamma,\delta}^0\| \\
&&\leq
|k^2-k^2_p|\,\|u_p;V_{\gamma,\delta}^0\|+
\|(-(-i\nabla+\textbf{A})^2\pm H+k_p^2)u_p;V_{\gamma, \delta}^0\|  \leq
c \varepsilon^{2\mu_1+1}.
\end{eqnarray*}
The supports of the functions $(-(-i\nabla+\textbf{A})^2\pm H+k^2_p)u_p$ and
$\chi_{2,\varepsilon^{\sigma}}$ are disjoint, so   \[
|(\chi_{2,\varepsilon^{\sigma}}(-(-i\nabla+\textbf{A})^2\pm H+k^2_0)u_p,v_0)_{G_2}|=|(k^2_0-k^2_p)(u_p,v_0)_{G_2}|\geq
c \varepsilon^{2\mu_1+1}.
\]
Further,  $\gamma-1<\mu_1+1/2$, therefore 
\begin{align*}
|(\chi_{2,\varepsilon^{\sigma}}S_{\varepsilon}f,v_0)_{G_2}| \leq
\|S_{\varepsilon}f;V_{\gamma,\delta}^0(G(\varepsilon))\|
\,\|v_0;V_{-\gamma}^0(G_2)\|\leq
c\|S_{\varepsilon}f;V_{\gamma,\delta}^0(G(\varepsilon))\|.
\end{align*}
Hence\[
|a(f)|\leq
c\varepsilon^{-2\mu_1-1}\|S_{\varepsilon}f;V_{\gamma,\delta}^0(G(\varepsilon))\|
\]
and $\|\widetilde{S}_{\varepsilon}f\|\leq
c\|S_{\varepsilon}f\|$. It follows that the operator 
$I+\widetilde{S}_{\varepsilon}$ in
$V_{\gamma,\delta}^{0,\perp}(G(\varepsilon))$
invertible as well as the operator of problem
(\ref{f0 problem in G(eps1eps2)}):\[
A_{\varepsilon}:u\mapsto-(-i\nabla+\textbf{A})^2u\pm H
u+k^2u:\mbox{\emph{\r{V}}}{}^{2,\perp}_{\gamma, \delta,
-} (G(\varepsilon))\mapsto
V^{0,\perp}_{\gamma,\delta}(G(\varepsilon));
\]
here $\mbox{\emph{\r{V}}}{}^{2,\perp}_{\gamma, \delta, -}
(G(\varepsilon))$ stands for the space of functions in
$V^2_{\gamma, \delta, -}
(G(\varepsilon))$ that vanish at  $\partial
G(\varepsilon)$ and are sent by the operator 
$-(-i\nabla+\textbf{A})^2\pm H+k^2$ to $V^{0,\perp}_{\gamma,\delta}$. The inverse operator
$A_{\varepsilon}^{-1}=\widetilde{R}_{\varepsilon}(I+\widetilde{S}_{\varepsilon})^{-1}$
has been bounded uniformly with respect to 
$\varepsilon$ and $k$. Therefore,  
(\ref{estimate probl G(eps1eps2)}) holds with a constant  $c$ independent of  
$\varepsilon$ and $k$.\qquad
\end{proof}
We consider the solution $u_1$ to the homogeneous problem (\ref{scalar Pauli equation}) satisfying 
\[u_1(x,y,z) = \begin{cases}
    U_1^+(x,y,z)+s_{11}\,U^-_1(x,y,z)+O(\exp{(\delta x)}),  & x\rightarrow -\infty, \\
               s_{12}\,U^-_{2}(x,y,z)+O(\exp{(-\delta x)}),  & x\rightarrow +\infty. \end{cases}\]
Let $s_{11}$ and $s_{12}$ be the entries of the scattering matrix determined by this solution. Denote by  $\widetilde{u}_{1,\sigma}$ the function given by (\ref{asympt of wave function}) changing  $\Theta(r_j)$  
for $\Theta(\varepsilon_j^{-2\sigma}r_j)$ and dropping the remainder $R$, while $\widetilde{s}_{11}$, and  $\widetilde{s}_{12}$  stand for the quantities defined in (\ref{s11}) and (\ref{s12}).

\begin{thm}\label{thm5.4}
Let the assumptions of Proposition \/ {\rm\ref{prop5.3}} be fulfilled. Then the inequality
\[
|s_{11}-\widetilde{s}_{11}| +|s_{12}-\widetilde{s}_{12}|\leq
c|\widetilde{s}_{12}|\varepsilon^\tau
\]
holds with constant  $c$ independent of $\varepsilon,k$; $\tau=\min\{2-\delta,\mu_2-\mu_1\}$ and with arbitrarily small positive $\delta$.\end{thm}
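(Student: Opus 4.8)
The plan is to deduce Theorem \ref{thm5.4} from the \emph{a priori} estimate of Proposition \ref{prop5.3} applied to the difference $w:=u_1-\widetilde u_{1,\sigma}$. Since $u_1$ solves the homogeneous problem (\ref{f0 problem in G(eps1eps2)}) exactly while $\widetilde u_{1,\sigma}$ is assembled from the limit--problem solutions carrying the \emph{same} incoming wave $U_1^+$, the radiation conditions of $u_1$ (with $s_{11},s_{12}$) and of $\widetilde u_{1,\sigma}$ (with $\widetilde s_{11},\widetilde s_{12}$, cf. (\ref{v1 radiation condition}), (\ref{v3 radiation condition})) show that the incoming waves cancel and $w=\widetilde w+\eta_1(s_{11}-\widetilde s_{11})U_1^-+\eta_3(s_{12}-\widetilde s_{12})U_2^-$ with $\widetilde w\in V^2_{\gamma,\delta}(G(\varepsilon))$. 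This is precisely the representation required in Proposition \ref{prop5.3}, with $A_1^-=s_{11}-\widetilde s_{11}$ and $A_2^-=s_{12}-\widetilde s_{12}$. Setting $\mathcal F:=A_\varepsilon\widetilde u_{1,\sigma}$ for the discrepancy (where $A_\varepsilon$ is the operator of (\ref{f0 problem in G(eps1eps2)}), so that $A_\varepsilon w=-\mathcal F$ because $A_\varepsilon u_1=0$), estimate (\ref{estimate probl G(eps1eps2)}) gives at once
$$
|s_{11}-\widetilde s_{11}|+|s_{12}-\widetilde s_{12}|\le c\,\|\mathcal F;V^0_{\gamma,\delta}(G(\varepsilon))\|.
$$
Thus the theorem reduces to two tasks: (i) verifying $\mathcal F\in V^{0,\perp}_{\gamma,\delta}$, and (ii) proving the sharp bound $\|\mathcal F;V^0_{\gamma,\delta}\|\le c\,|\widetilde s_{12}|\,\varepsilon^{\tau}$.

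For (i) I would observe that the orthogonality is automatic. Each $v_j$ solves its homogeneous first--kind limit problem exactly, and $\textbf{A}$, $H$ vanish near the narrows, so $\mathcal F$ is a sum of commutator terms $[\Delta,\chi_{j,\varepsilon}](\cdot)$, $[\Delta,\Theta(\varepsilon^{-2\sigma}r_j)]w_j$ and of $k^2\Theta(\varepsilon^{-2\sigma}r_j)w_j$, all supported within distance $O(\varepsilon^{2\sigma})$ and $O(\varepsilon)$ of the conical points $O_1,O_2$. For small $\varepsilon$ these supports lie inside the zone $r_j<\tfrac12\delta\varepsilon^{\sigma}$ on which $\chi_{2,\varepsilon^{\sigma}}$ vanishes, whence $\chi_{2,\varepsilon^{\sigma}}\mathcal F\equiv0$ and the defining condition $(\chi_{2,\varepsilon^{\sigma}}\mathcal F,v_0)_{G_2}=0$ holds trivially. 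In particular the $u_p$--correction of Step D of Proposition \ref{prop5.3} is not needed here.

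The substance is (ii), which I would obtain by re-running the term--by--term analysis of Step B of Proposition \ref{prop5.3}, with two differences from the estimate leading to (\ref{estimate vp}). First, $\widetilde u_{1,\sigma}$ is taken at the real energy $k$ and carries \emph{no} far--field cut--off $\Xi$, so the commutator $[\Delta,\Xi]v_l$ that there bounded the discrepancy from below by $\varepsilon^{2\mu_1+1}$ is absent; only the near--narrow discrepancies survive. Second, the coefficients $\widetilde s_{11}$, $C_1$, $C_2$ and the amplitudes of $v_2,v_3,w_2$ are all tied to $\widetilde s_{12}$ through (\ref{s11}), (\ref{s12}), (\ref{C1C2}), (\ref{ab2+-}), and the matching relations (\ref{ab conditions}) couple the singular and regular coefficients at each narrow; tracking this dependence is what extracts the prefactor $|\widetilde s_{12}|$. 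After subtracting the two compensated harmonics $r_j^{-\mu_1-1}$, $r_j^{\mu_1}$, the leading residuals are the next conical harmonic $O(r_j^{\mu_2})$ and the $k^2$--corrections in $\widetilde N_{\mu_1+1/2}(kr_j)$, $\widetilde J_{\mu_1+1/2}(kr_j)$, of relative size $O(r_j^{2})$. Passing to $\rho_j=r_j/\varepsilon$ and computing the weighted $V^0_{\gamma,\delta}$--norms of the commutators produces the relative orders $\varepsilon^{\mu_2-\mu_1}$ and $\varepsilon^{2-\delta}$, i.e. $\tau=\min\{2-\delta,\mu_2-\mu_1\}$.

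The step I expect to be the main obstacle is assembling these contributions into the single bound $c\,|\widetilde s_{12}|\,\varepsilon^{\tau}$ uniformly across the resonance window, since near $k_r$ the solution is resonantly enhanced and the magnitudes of its constituents are delicate. The part of $\mathcal F$ manifestly proportional to $\widetilde s_{12}$ yields the claimed bound directly; any residual term not manifestly carrying this factor — notably the contribution of the incoming field $\tfrac1{\overline A}\overline{\mathbf v}_1$ — is of absolute order $O(\varepsilon^{\gamma+\mu_1+3/2})$, and I would absorb it using the lower bound $|\widetilde s_{12}|\ge c\,\varepsilon^{2\mu_1+1}$, valid throughout $|k^2-k_r^2|=O(\varepsilon^{2\mu_1+1})$ as one reads off the Lorentzian (\ref{T}) together with (\ref{pole}), and the freedom to take $\gamma$ near the upper endpoint $\mu_1+3/2$ of its admissible range. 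Then $\varepsilon^{\gamma+\mu_1+3/2}\le c\,\varepsilon^{2\mu_1+1+\tau}\le c\,|\widetilde s_{12}|\,\varepsilon^{\tau}$, so $\|\mathcal F\|\le c\,|\widetilde s_{12}|\,\varepsilon^{\tau}$; substituting into the reduced inequality finishes the proof, with constants uniform in $\varepsilon$ and $k$ by the corresponding uniformity in Proposition \ref{prop5.3}.
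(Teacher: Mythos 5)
Your reduction is the paper's own: you set $w=u_1-\widetilde u_{1,\sigma}$, recognize $A_1^-=s_{11}-\widetilde s_{11}$ and $A_2^-=s_{12}-\widetilde s_{12}$ in the representation required by Proposition \ref{prop5.3}, apply (\ref{estimate probl G(eps1eps2)}), and aim at the bound $\|\mathcal F;V^0_{\gamma,\delta}(G(\varepsilon))\|\leq c\,|\widetilde s_{12}|\,\varepsilon^{\tau}$ via a Step-B-type analysis with $\gamma=\mu_1+3/2-\delta$; your support argument for the orthogonality condition is a reasonable fleshing-out of what the paper only asserts, and your identification of the two exponents $2-\delta$ and $\mu_2-\mu_1$ agrees with the paper's estimate (\ref{estimate f1}).

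The gap is in your last paragraph, and it sits at the decisive point. The incoming field $\frac{1}{\overline A}\overline{\mathbf v}_1$ is \emph{singular} at $O_1$: by (\ref{asympt v1 at inf}) and (\ref{a1+-}) it contributes the amount $1/\overline A=O(1)$ to the coefficient $a_1^-$ of the harmonic $r_1^{-\mu_1-1}$, not only to the regular coefficient $a_1^+$. Hence its residual discrepancy after compensation, dominated by the Neumann-function correction $r_1^{-1/2}\widetilde N_{\mu_1+1/2}(kr_1)-r_1^{-\mu_1-1}=O\bigl(r_1^{-\mu_1+1}\bigr)$, has $V^0_{\gamma,\delta}$-norm of order $\varepsilon^{\gamma-\mu_1+1/2}=\varepsilon^{2-\delta}$, which exceeds your claimed $O(\varepsilon^{\gamma+\mu_1+3/2})$ by the factor $\varepsilon^{-2\mu_1-1}$. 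Your absorption then breaks down: $\varepsilon^{2-\delta}\leq c\,|\widetilde s_{12}|\,\varepsilon^{\tau}$ with only $|\widetilde s_{12}|\geq c\,\varepsilon^{2\mu_1+1}$ would force $\tau\leq 1-2\mu_1-\delta$, which is incompatible with $\tau=\min\{2-\delta,\mu_2-\mu_1\}$ (and is even negative once $\mu_1\geq 1/2$). What saves the estimate, and what the paper does, is to never estimate the incoming wave by itself: near resonance its singular coefficient is almost cancelled by the reflected wave, and this cancellation is precisely the content of the first relation in (\ref{ab conditions}) with $j=1$, namely $a_1^-\varepsilon^{-\mu_1-1}=\alpha a_1^+\varepsilon^{\mu_1}+\beta b_1^+\varepsilon^{\mu_1}$. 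From the pair (\ref{ab conditions}) one deduces $|a_j^-|\varepsilon^{-\mu_1-1}+|a_j^+|\varepsilon^{\mu_1}\leq c\bigl(|b_j^-|\varepsilon^{-\mu_1-1}+|b_j^+|\varepsilon^{\mu_1}\bigr)$, and then (\ref{b1+-}), (\ref{ab2+-}), (\ref{C1C2}), (\ref{gammadelta}) give $|b_j^-|\varepsilon^{-\mu_1-1}+|b_j^+|\varepsilon^{\mu_1}\leq c\,\varepsilon^{-\mu_1-1}|\widetilde s_{12}|$: the prefactor $|\widetilde s_{12}|$ comes out of this algebra for \emph{all} coefficients, including the incoming one, and no lower bound on $|\widetilde s_{12}|$ is needed anywhere. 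Your scheme becomes correct once this matching-based cancellation replaces the unsupported order claim; as written, that step fails.
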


\begin{proof} 
The difference  $R=u_1-\widetilde{u}_{1,\sigma}$ belongs to  $V^2_{\gamma,\,\delta,
-}(G(\varepsilon))$, whereas 
$f_1:=(-(-i\nabla+\textbf{A})^2\pm H+k^2)(u_1-\widetilde{u}_{1,\sigma})$ is in
$V^{0,\perp}_{\gamma,\,\delta}(G(\varepsilon))$.  By Proposition ~\ref{prop5.3},
\begin{equation}\label{estimate R}
\|R;V^2_{\gamma, \,\delta,
-}(G(\varepsilon))\|\leq
c\,\|f_1;V^0_{\gamma,\delta}(G(\varepsilon))\|.
\end{equation}
We show that
\begin{equation}\label{estimate f1}
\|f_1;V^0_{\gamma,\,\delta}(G(\varepsilon))\|\leq
c|\widetilde{s}_{12}|(\varepsilon^{\gamma-\mu_1+1/2}+\varepsilon^{\mu_2-\mu_1-\sigma_1}),
\end{equation}
where $\sigma_1=2\sigma(\mu_2-\gamma+3/2)$. Then the desired estimate will follow from the last two  inequalities with $\gamma=\mu_1+3/2-\delta$ and $\sigma_1=\delta$. 

Arguing as in the step B of the proof of Proposition ~\ref{prop5.3} we obtain
\begin{eqnarray*}
\|f_1;V^0_{\gamma,\,\delta}(G(\varepsilon))\|&\leq& c(\varepsilon^{\gamma+3/2}+\varepsilon^{\mu_2+1-\sigma_1})\\
&\times&\max_{j=1,2}(|a_j^-(\varepsilon)|\varepsilon^{-\mu_1-1}  +|a_j^+(\varepsilon)|\varepsilon^{\mu_1}+|b_j^-(\varepsilon)|\varepsilon^{-\mu_1-1}  +|b_j^+(\varepsilon)|\varepsilon^{\mu_1}).
\end{eqnarray*}
From (\ref{ab conditions}) it follows that 
\[(|a_j^-(\varepsilon)|\varepsilon^{-\mu_1-1}  +|a_j^+(\varepsilon)|\varepsilon^{\mu_1})\leq c(|b_j^-(\varepsilon)|\varepsilon^{-\mu_1-1}  +|b_j^+(\varepsilon)|\varepsilon^{\mu_1}).\]
Taking into account (\ref{b1+-}) and (\ref{ab2+-}) for  $b_j^{\pm}$ and also (\ref{C1C2}) and (\ref{gammadelta}), we derive
\[|b_j^-(\varepsilon)|\varepsilon^{-\mu_1-1}  +|b_j^+(\varepsilon)|\varepsilon^{\mu_1} \leq c \varepsilon^{-\mu_1-1}|\widetilde{s}_{12}(\varepsilon)|.\]
Combining the obtained estimates, we arrive at (\ref{estimate f1}).\qquad\end{proof}
Theorem \ref{thm5.4} together with   (\ref{T}) and (\ref{quality factor}) lead to the following assertion. We return here to the detailed notations introduced in Sections \ref{s2new} - \ref{s4new}. 

\begin{thm}\label{thm6.3}

For \/ $|k^2-k_{r,\pm}^2|=O(\varepsilon^{2\mu_1+1})$ there hold the asymptotic representations 
\begin{align*}
T^\pm(k,\varepsilon)&=\frac{1}{\dfrac{1}{4}\left(\dfrac{|b_1^\pm|}{|b_2^\pm|}+\dfrac{|b_2^\pm|}{|b_1^\pm|}\right)^2+P_\pm^2\left(\dfrac{k^2-k^2_{r,\pm}}{\varepsilon^{4\mu_1+2}}\right)^2}(1+O(\varepsilon^\tau)),\\
k^2_{r,\pm}&=k^2_{0,\pm}-\alpha
(|b_1^\pm|^2+|b_2^\pm|^2)\varepsilon^{2\mu_1+1}+
O\bigl(\varepsilon^{2\mu_1+1+\tau}\bigr),\\
\Upsilon^\pm(\varepsilon)&=
\left(\dfrac{|b_1^\pm|}{|b_2^\pm|}+\dfrac{|b_2^\pm|}{|b_1^\pm|}\right)P^{-1}_\pm\varepsilon^{4\mu_1+2}\bigl(1+O(\varepsilon^\tau)\bigr),
\end{align*}
where $\Upsilon^\pm(\varepsilon)$ is the width of the resonant peak at its half-height (the so-called resonant quality factor), $P_\pm=(2|b_1^\pm||b_2^\pm|\beta^2|A(k_0)|^2)^{-1}$, and $\tau=\min\{2-\delta,\mu_2-\mu_1\}$, $\delta$ being an arbitrary small positive number.\end{thm}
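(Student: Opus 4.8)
The plan is to derive the three asymptotic representations by inserting the justified error bound of Theorem~\ref{thm5.4} into the formal formulas (\ref{T}), (\ref{pole}) and (\ref{quality factor}) of Subsection~\ref{subs. resonant tunnelling}, and then tracking which remainder dominates. Since Theorem~\ref{thm5.4} and all of Section~5 are written with the index~$\pm$ suppressed, I would run the argument once and apply it to each of the two scalar equations in (\ref{scalar Pauli equation}), restoring $\pm$ at the end.

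First, for the transition coefficient. Theorem~\ref{thm5.4} gives $s_{12}^\pm=\widetilde s_{12}^\pm\,(1+O(\varepsilon^\tau))$ with the remainder uniform in $k$ on the window $|k^2-k_{r,\pm}^2|=O(\varepsilon^{2\mu_1+1})$, whence
\[
T^\pm=|s_{12}^\pm|^2=|\widetilde s_{12}^\pm|^2\,(1+O(\varepsilon^\tau))=\widetilde T^\pm\,(1+O(\varepsilon^\tau)).
\]
Substituting the explicit Lorentzian (\ref{T}) for $\widetilde T^\pm$ and multiplying the two remainders $(1+O(\varepsilon^\tau))(1+O(\varepsilon^{2\mu_1+1}))$, the surviving factor is $1+O(\varepsilon^\tau)$ because $\tau\le 2\mu_1+1$ in the present geometry (recall $\mu_1>1$ for the cone $K$, while $\tau\le 2-\delta$). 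This yields the first formula.

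For the resonant energy, $k_{r,\pm}^2$ is the abscissa of the maximum of $T^\pm$, and I would show it agrees to the required order with the peak of the formal Lorentzian, that is, with the real part of the pole $k_p^2$ of (\ref{pole}). The subtle point is that $\widetilde T^\pm$ is concentrated on the extremely narrow scale $\varepsilon^{4\mu_1+2}$, so to estimate the displacement of the maximiser induced by the factor $1+O(\varepsilon^\tau)$ I must control not only the size but also the $k^2$-derivative of $E:=s_{12}^\pm/\widetilde s_{12}^\pm-1$. For this I would use that the construction of Section~5 is analytic in $k^2$ and that the bound of Theorem~\ref{thm5.4} is uniform in $k$, so a Cauchy estimate on a disc of radius $\sim\varepsilon^{2\mu_1+1}$ controls $E'$, after which the maximiser is located by the implicit function theorem. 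The net effect is that the leading shift $k_{r,\pm}^2-k_{0,\pm}^2=-\alpha(|b_1^\pm|^2+|b_2^\pm|^2)\varepsilon^{2\mu_1+1}$ is pinned down only up to a relative factor $1+O(\varepsilon^\tau)$, i.e. up to an absolute error $O(\varepsilon^{2\mu_1+1+\tau})$, which subsumes the intermediate simplification error $O(\varepsilon^{4\mu_1+2})$ of (\ref{pole}) since $\tau\le 2\mu_1+1$.

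Finally, the quality factor $\Upsilon^\pm$ is read off as the half-height width of $T^\pm$: the half-maximum level and the two abscissae realising it are each displaced by a relative amount $O(\varepsilon^\tau)$, so (\ref{quality factor}) persists with remainder $1+O(\varepsilon^\tau)$. The main obstacle is the resonant-energy step: near a peak of width $\varepsilon^{4\mu_1+2}$ even a slowly varying multiplicative error could in principle move the maximiser, and making this rigorous is exactly what forces the use of the analyticity in $k^2$ and the $k$-uniformity built into Theorem~\ref{thm5.4}. Once $s_{12}^\pm=\widetilde s_{12}^\pm(1+O(\varepsilon^\tau))$ is available, the computations for $T^\pm$ and $\Upsilon^\pm$ are routine.
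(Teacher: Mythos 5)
Your proposal is correct and follows essentially the same route as the paper, whose entire proof of Theorem~\ref{thm6.3} is the one-line remark that Theorem~\ref{thm5.4} combined with (\ref{T}), (\ref{pole}), and (\ref{quality factor}) yields the assertion; your write-up merely supplies the bookkeeping the paper leaves implicit. One small remark: the analyticity/Cauchy-estimate machinery you invoke to control the displacement of the maximiser is not actually needed, since a level-set comparison (the maximiser of $T^\pm$ must lie where $\widetilde T^\pm\ge\bigl(1-O(\varepsilon^\tau)\bigr)\max\widetilde T^\pm$, i.e. within $O\bigl(\varepsilon^{4\mu_1+2+\tau/2}\bigr)$ of the peak of the Lorentzian) already locates $k^2_{r,\pm}$ far more precisely than the claimed remainder $O\bigl(\varepsilon^{2\mu_1+1+\tau}\bigr)$.
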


\end{document}